\documentclass[lettersize,journal]{IEEEtran}
\usepackage{amsmath,amssymb,amsfonts}
\usepackage{algorithmic}
\usepackage{graphicx}
\usepackage{textcomp}
\usepackage{stfloats}
\usepackage{xcolor}
\usepackage{subfigure}
\usepackage{amsthm}
\newtheorem{theorem}{Theorem}
\usepackage{comment}
\usepackage{enumitem}
\usepackage{amsmath,amsfonts}

\newtheorem{remark}{Remark}

\usepackage{cite}
\begin{document}

\title{Over-the-Air Interference Nulling Using Active RIS}
\author{
\IEEEauthorblockN{Junzhi Wang, Jun Sun, Limin Liao, Xiangbai Liao, Yingzhuang Liu}\\
     \thanks{Junzhi Wang, Jun Sun, Limin Liao, and Yingzhuang Liu are with the School of Electronic Information and Communications, Huazhong University of Science and Technology, Wuhan, China.

     Xiangbai Liao is with the Hunan Institute of Technology, Hengyang, China.}

} 

\maketitle

\begin{abstract}
Interference fundamentally limits the performance of dense wireless networks, and reconfigurable intelligent surfaces (RIS) have recently emerged as a promising means of enabling interference-free transmission in the Degrees-of-Freedom (DoF) sense. This paper investigates the feasibility of achieving full DoF in a two-way 
$K$-user interference channel—a canonical interference-limited setting—by employing an active RIS. Unlike its passive counterpart, an active RIS is subject to both per-element gain constraints and a total reflection-power constraint, which renders over-the-air interference nulling equivalent to solving a constrained random linear system with coupled nonlinear constraints. By leveraging tools from high-dimensional convex geometry, we derive a tight scaling threshold on the required number of reflecting elements (REs) for full-DoF transmission. We further extend the analysis to scenarios where each RE incurs circuit power consumption under a total power budget, leading to a fundamental tradeoff between RIS transmit power and circuit power. For this setting, we establish the thresholds for both the total power and the corresponding number of REs required to achieve interference-free transmission. Simulation results validate the theoretical analysis.

\end{abstract}

\begin{IEEEkeywords}
Reconfigurable intelligent surface, degree of freedom, interference nulling.
\end{IEEEkeywords}

\section{Introduction}
As 6G networks aim to support unprecedented connection densities, interference has become a fundamental bottleneck. Reconfigurable intelligent surfaces (RIS) have recently emerged as a promising tool for interference management \cite{bafghi2022degrees,jiang2022interference,chae2022cooperative,10706811}. However, passive RIS suffers from the \textit{multiplicative fading effect}: the path loss of the cascaded transmitter–RIS–receiver link equals the product of the individual reflective path losses, making it several orders of magnitude weaker than the direct link \cite{9306896}. As a result, thousands of passive reflecting elements (REs) are often required to overcome this severe attenuation, which challenges practical deployment.

In contrast, benefiting from amplification circuits, active RIS can not only reflect but also amplify the incident signals, thereby effectively mitigating multiplicative fading \cite{you2021wireless, yang2025robust, 9963962, 10319318}. 
However, to employ active RIS for interference nulling, the following key issues remain open and pose fundamental challenges for system design and analysis:
\begin{enumerate}[label=\arabic*)]  
\item Can active RIS achieve interference-free transmission under various hardware constraints?
\item How do these hardware constraints affect the required number of active REs?
\end{enumerate}

Mathematically, achieving interference-free transmission is equivalent to solving a constrained random linear system ${\bf{Ax}} + {\bf{b}} = 0$, where ${\bf{A}}$ and ${\bf{b}}$ denote the interference channels of the cascaded and direct links, and ${\bf{x}} \in \mathbb{C}^N$ represents the RIS adjustment. Without constraints, the system is feasible if ${\text{rank}}\left( {\mathbf{A}} \right) = {\text{rank}}\left( {\left[ {{\mathbf{A}}|{\mathbf{b}}} \right]} \right) \leqslant N$. In practice, hardware-imposed constraints on ${\bf{x}}$ make the feasibility analysis much more challenging. The main difficulties are summarized as follows:
\begin{enumerate}[label=\arabic*)]
\item 
\textit{Amplitude and Power constraints on RIS:}
For an active RIS, the amplitude (or amplification gain) of its element is normally bounded in practical implementations. Moreover, the total transmit power of an RIS is also bounded. These two constraints make the feasibility analysis of interference-free transmission more challenging.

\item 
\textit{Reflection–Circuit power tradeoff of RIS:} 
If the circuit power of each RE is also taken into account, there might be a tension between the reflection power and the circuit power consumption when the total power supply at the RIS is limited. Thus, it is essential to characterize both the minimum RIS power budget and the minimum number of REs required to ensure interference-free transmission.

\end{enumerate}

This paper addresses these challenges by geometrically formulating this feasibility problem as a set-intersection problem and employing tools from high-dimensional probability. 
The derived necessary and sufficient conditions not only demonstrate the feasibility of interference-free transmission but also reveal how various system parameters impact the required number of RE, thereby providing theoretical insights into the practical design.

\subsection{Prior Works}
Recently, RIS has attracted widespread attention in both academia and industry for its cost- and energy-efficient control of electromagnetic wave propagation. By appropriately adjusting the phase shifts, RIS can enhance desired signals and suppress interference at receivers \cite{peng2024two,shi2025combating,wang2021joint}. Compared with conventional interference alignment, which requires symbol extension and achieves a degree of freedom (DoF) of $\frac{K}{2}$ in the $K$-user interference channel \cite{cadambe2008interference}, recent studies have shown that RIS can completely eliminate interference without symbol extension and achieve the full DoF $K$ \cite{bafghi2022degrees,jiang2022interference,chae2022cooperative}. 
Specifically, \cite{jiang2022interference} showed that slightly more than $2K(K-1)$ passive REs are sufficient to achieve the full DoF of $K$ under weak direct channels, whereas stronger direct channels require more REs.

The reason behind the above phenomenon—that stronger direct channels require more REs—is twofold. First, stronger direct channels lead to stronger interference, thereby requiring more REs for mitigation. Second, due to the \textit{multiplicative fading effect}, the cascaded reflective link is typically several orders of magnitude weaker than the direct link \cite{9306896}. To compensate for this disparity, particularly under stronger direct channels, thousands of passive REs are typically required \cite{9998527}.

To overcome the \textit{multiplicative fading effect}, an enhanced RIS architecture, known as active RIS, has been proposed. By incorporating amplification circuits, active RIS can both reflect and amplify signals \cite{9998527, ZhouGui}. This capability helps compensate for the severe attenuation in the cascaded reflective links, thereby effectively mitigating the multiplicative fading and enhancing the received signal strength \cite{you2021wireless, yang2025robust, 9963962, 10319318}. It was shown in \cite{10319318} that, in a dual-function radar-communication system, active RIS can enhance the radar signal-to-interference-plus-noise ratio (SINR) by up to 70 dB compared to passive RIS.

By amplifying incident signals, active RIS can achieve the desired performance with fewer REs. Specifically, \cite{bafghi2022degrees} showed that the full DoF $K$ of the $K$-user interference channel can be achieved if the number of active REs exceeds $K(K-1)$. However, this analysis ignores amplification amplitude and power budget constraints. Considering these practical limits, \cite{10706811} showed that a necessary condition for active RIS to null interference is that the number of REs must be at least the rank of the cross-interference channels. Specifically, if the cross-interference channels are linearly independent, the required number of REs $N$ must satisfy $N \ge K(K-1)$. Furthermore, \cite{10706811} also demonstrated that when the cross-interference channels are extremely strong, interference-free transmission may become infeasible under a limited power budget.

Additionally, \cite{long2021active} showed that increasing the number of active REs raises circuit power consumption, which in turn reduces the power available for reflection and may render interference-free transmission infeasible.
However, the feasibility conditions, as well as the impact of the power budget, amplification constraints, and channel strength disparity on the required number of REs, remain unclear. 

\subsection{Main Contributions}
This paper aims to determine the feasibility conditions for interference-free transmission (i.e., full DoF) and characterize the impact of system parameters on the required number of active REs. Mathematically, the interference-free transmission problem is equivalent to analyzing the feasibility of a random linear system under joint norm and box constraints.
To derive a necessary condition, we first decompose the problem into two set-intersection subproblems and solve them using Gordon’s Theorem and the geometric properties of the subsets. Their results are then combined to yield the necessary condition for the original problem.
To derive a sufficient condition, we first approximate the intersection of the norm and box constraints with a smaller but more tractable set. Then, by exploiting the geometric relationship between this set and the interference-nulling equations, we establish a sufficient condition for the original problem. Furthermore, to ensure interference-free transmission under a limited total power budget, we also identify the feasibility conditions on the total power and the number of active REs.

Specifically, in a two-way $K$-user interference channel, let $\eta$ be the strength ratio between the direct and cascaded reflective links, and $\alpha$ the RE amplification factor. The reflection power at the RIS and the transmit power of each user are denoted by $P_r$ and $P$. Moreover, $\sigma_r^2$ and $\tau^2$ represent the RE thermal noise power and the user–RIS channel attenuation, respectively. Based on these definitions, we obtain the following key results:

\begin{enumerate}[label=\arabic*)]
\item 
\textbf{Without total power constraints:} 
If the RIS power supply is unconstrained, the necessary and sufficient conditions for achieving full DoF define a threshold number of REs, ${N_{\text{th}}}$, which scales on the same order\footnote{For the necessary condition, $\eta$ and $\tau^2$ can take values from the weakest direct and strongest reflective links. Conversely, for the sufficient condition, their values can be taken from the strongest direct and weakest reflective links.}:
\begin{equation}
{N_{{\text{th}}}} = \left\{ \begin{gathered}
  O\left( {{K^2}} \right), {\text{if }}\eta  < {\text{min}}\left( {\sqrt \beta  ,\alpha K} \right), 
  \text{ (user-limited)}
  \hfill \\
  O\left( {K\frac{\eta }{\alpha }} \right),{\text{if }}\alpha K < \eta  < \frac{\beta }{{\alpha K}}, 
  \text{ (gain-limited)} 
  \hfill \\
  O\left( {\frac{{{K^2}}}{\beta }{\eta ^2}} \right),{\text{otherwise}}, 
  \text{ (power-limited)}  
  \hfill \\ 
\end{gathered}  \right.
\end{equation}
where $\beta = \frac{P_r}{KP\tau^2 + \sigma_r^2}$ can be viewed as the overall RIS amplification gain.

\quad \textit{Intuitive interpretation}: When both the per-RE gain $\alpha^2$ and the overall gain $\beta$ (or reflection power $P_r$) are sufficiently large, the required number of REs $N$ depends only on the number of user pairs. When $\beta$ is sufficient but $\alpha$ is limited, each RE cannot effectively compensate for the strength disparity between the direct and cascaded reflective links, so $N$ is mainly determined by the path strength ratio $\eta$ and $\alpha$. Otherwise, $\beta$ becomes the dominant factor. Specifically, a larger $\beta$ enables more RE to operate at their maximum gain, thereby mitigating the link disparity and reducing $N$.

\item
\textbf{With total power constraints:} When the total power budget, i.e. ${P_{{\text{tot}}}} = {P_r} + N{P_{{\text{cir}}}}$ is constrained,  to guarantee interference-free transmission, ${P_{{\text{tot}}}}$ must satisfy 
\begin{equation}
{P_{{\text{tot}}}} = \Omega \left( {{K^{\frac{3}{2}}}\max \left( {{K^{\frac{1}{2}}},{\eta ^{\frac{1}{2}}}} \right)} \right).
\end{equation}
This is because if the total power ${P_{{\text{tot}}}}$ is below a certain threshold,  the required power for the REs cannot be supported, making full-DoF infeasible.
Moreover, in most cases, the required number of REs should satisfy
\begin{equation}
N = \Omega\left( {K\max \left( {K,\frac{\eta }{\alpha }} \right)} \right).
\end{equation}

\end{enumerate}

\subsection{Paper Organization and Notation}
The rest of the paper is organized as follows. Section~\ref{System_Model} introduces the system model. Section~\ref{S_IV} analyzes the required number of REs for interference-free transmission. Section~\ref{S_V}
establishes the feasibility condition for the active RIS under a limited total power budget.
Section~\ref{Simulation} presents the numerical results. Finally, Section~\ref{Conclusion} concludes the paper.

\emph{Notations:} Scalars, vectors, and matrices are denoted by lowercase ($x$), boldface lowercase (${\bf{x}}$), and boldface uppercase letters (${\bf{X}}$), respectively. The operators ${\left(  \cdot  \right)^*}$, ${\left(  \cdot  \right)^T}$, and ${\left(  \cdot  \right)^H}$ denote the conjugate, transpose, and conjugate transpose, respectively. ${\left(  \cdot  \right)^ + }$ and $\left|  \cdot  \right|$ represent the pseudo-inverse and Frobenius norm. $E\left(  \cdot  \right)$ denotes expectation. ${\text{diag}}\left( {\mathbf{x}} \right)$ returns a diagonal matrix with ${\mathbf{x}}$ on the diagonal, and ${{\mathbf{I}}_K}$ denotes the $K \times K$ identity matrix. ${\text{pro}}{{\text{j}}_{\mathbf{A}}}\left( {\mathbf{x}} \right)$ denotes the projection of ${\mathbf{x}}$ onto the column space of ${\mathbf{A}}$. $\Omega \left(  \cdot  \right)$ describes the minimal growth rate.

\section{System Model}
\label{System_Model}
As illustrated in Fig. \ref{fig1}, an active RIS with $N$ reflective elements is deployed between $K$ single-antenna user pairs. All users operate in full-duplex mode, enabling simultaneous transmission and reception. Consequently, each user $k$ receives the desired signal from its pair $k+K$ and interference from other users on both sides within the same time slot.

\begin{figure}
	\centering\includegraphics[width=6cm]{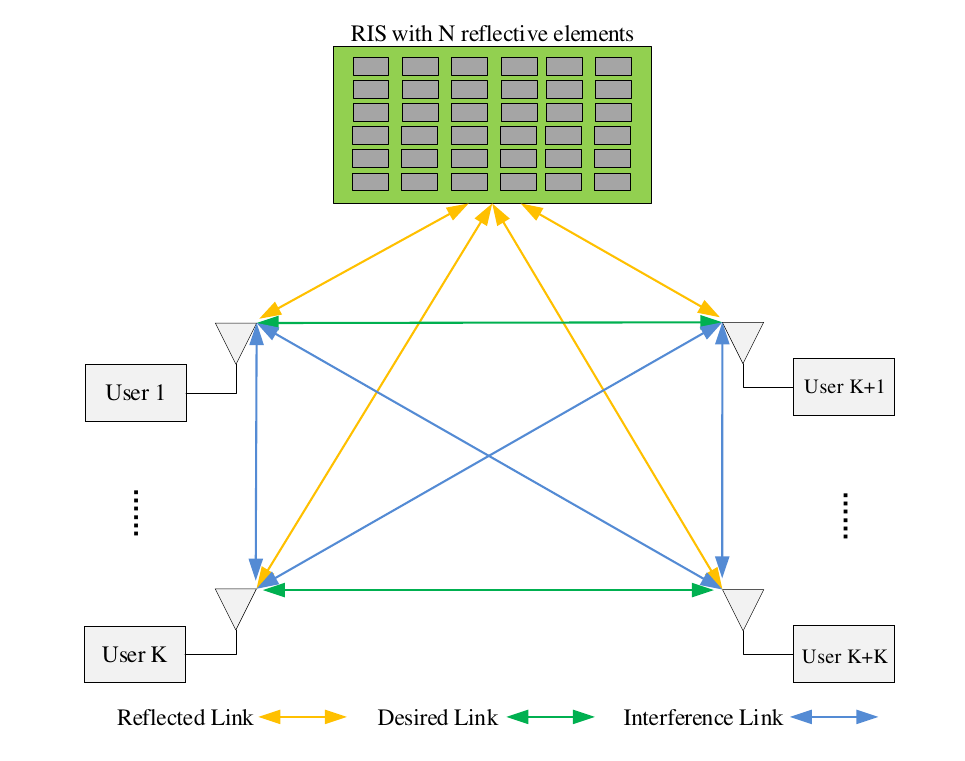}
	\caption{The RIS-aided two-way $K$-user interference channel.}
	\label{fig1}
\end{figure}

\subsection{Channel Model}
Let ${{\mathbf{h}}_{Ii}} \in {\mathbb{C}^{N \times 1}}$ and ${{\mathbf{h}}_{iI}} \in {\mathbb{C}^{N \times 1}}$ denote the channels from user $i$ to the RIS and from the RIS to user $i$, respectively. The direct channel from user $j$ to user $i$ is denoted by ${h_{ij}} \in {\mathbb{C}^{1 \times 1}}$. Channel reciprocity is assumed for all links \cite{atapattu2020reconfigurable,ma2021joint}, and perfect CSI is available at both the active RIS controller and the users \cite{peng2021multiuser,chen2023next}, which can be obtained using existing estimation methods, e.g., \cite{chen2023channel,fernandes2023channel}.

Following \cite{jiang2022interference}, which shows that the phase transition in the required number of REs ($N$) for interference nulling is independent of the channel model, we model all channels as Rayleigh fading for simplicity.
Although path loss varies with user locations, the dominant factor affecting $N$ is the intensity disparity between the direct and cascaded reflective links. To capture its impact, we assume uniform path loss\footnote{Assuming uniform path loss across users is both practical and convenient for analysis. Since stronger direct links result in more severe interference, more REs are required to suppress it. Therefore, the strongest direct link and the weakest reflective link can serve as a worst-case scenario for determining the maximum number of REs needed to guarantee interference-free transmission. Conversely, the weakest direct link and the strongest reflective link can be used to determine the minimum required number of REs. } for each link type. 
Specifically, ${{\bf{h}}_{Ii}} \sim {\cal C}{\cal N}\left( {{\bf{0}},\tau _1^2{{\bf{I}}_N}} \right)$ for $i \in \left[ {1,K} \right]$, and ${{\bf{h}}_{I{\rm{j}}}} \sim {\cal C}{\cal N}\left( {{\bf{0}},\tau _2^2{{\bf{I}}_N}} \right)$ for $j \in \left[ {K + 1,2K} \right]$, with path loss factors $\tau _1^2$ and $\tau _2^2$, respectively.  
For direct links between same-side users (i.e., $i,j \in \left[ {1,K} \right]$ or $i,j \in \left[ {K + 1,2K} \right]$), ${h_{ij}} \sim {\cal C}{\cal N}\left( {0,\sigma _2^2} \right)$, while for cross-side users
($i \in \left[ {1,K} \right]$, $j \in \left[ {K + 1,2K} \right]$), ${h_{ij}} \sim {\cal C}{\cal N}\left( {0,\sigma _3^2} \right)$. 
Scenarios with different channel models and user-specific path losses are simulated in Section \ref{Simulation}, and the results validate these assumptions.

\subsection{Signal Model}
The transmit signal of user $i$ is ${x_i} = \sqrt{P_i} s_i$, where $s_i$ is the data symbol and $P_i$ the transmit power. Each $s_i$ is an independent complex Gaussian symbol with zero mean and unit variance, so that
$E\left( {{x_i}x_j^*} \right) = \left\{ {\begin{array}{*{20}{c}}
{{P_i},i = j}\\
{0,i \ne j}
\end{array}} \right.$.

With amplifier circuits, the active RIS can control both the phase and amplitude of the incident signal. However, the thermal noise introduced by its active REs is also amplified, and thus cannot be ignored as in the passive RIS \cite{ZhouGui,9998527}. Therefore, the signal reflected from the active RIS is given by

\begin{equation}
{{\bf{x}}_{r}}{\rm{ = diag}}\left( {\bf{v}} \right)\left( {\sum\limits_{i = 1}^{2K} {{{\bf{h}}_{Ii}}{x_i}}  + {{\bf{n}}_r}} \right), 
\end{equation}
where ${{\bf{n}}_r} \sim {\cal C}{\cal N}\left( {{\bf{0}},\sigma _r^2{{\bf{I}}_N}} \right)$ denotes the thermal noise introduced by active REs. And ${\mathbf{v}} = {\left[ {{\rho _1}{e^{j{\theta _1}}}, \cdots ,{\rho _N}{e^{j{\theta _N}}}} \right]^T} \in {\mathbb{C}^{N \times 1}}$,
with ${\theta_n} \in \left[ {0,2\pi } \right)$ and ${{\rho _n}}$ representing the phase shift and amplitude coefficient applied by the $n$-th active RE, respectively. 
Although the active RIS can amplify the incident signal, its amplification is hardware-limited. Specifically, the amplitude of each entry of ${\mathbf{v}}$ must satisfy 
\begin{equation}
\left| {{\rho _n}} \right|^2 \leqslant {\alpha^2} \Leftrightarrow \left| {{v_n}} \right|^2 \leqslant {\alpha^2},\forall n \in \left[ {1,N} \right],
\end{equation}
where $\alpha^2$ denotes the maximum per-RE amplification gain \cite{10706811,long2021active}, which can exceed unity\footnote{It has been shown in \cite{long2021active,amato2018tunneling} that an active RE can achieve a gain of 40 dB (i.e., $\alpha^2 = 40{\text{dB}}$) with only 45 µW of DC power consumption. }. In addition, the total power of the amplified reflected signal is constrained by ${\left| {{{\mathbf{x}}_{r}}} \right|^2} \leqslant {P_r}$,
where ${P_r}$ is the maximum reflection power available to the active RIS \cite{9998527,9734027}. 

Due to simultaneous transmission and reception, each user experiences self-interference from its own transmit signal. 
Since each user knows its own transmitted signal and the RIS configuration, this interference can be perfectly canceled\footnote{The case where self-interference cancellation is impossible is beyond the scope of this paper and left for future work.} \cite{atapattu2020reconfigurable}. Accordingly, the received signal at user $i$ is given by

\begin{equation}
\begin{aligned}
&{y_{i}} = {\mathbf{h}}_{iI}^H{\text{diag}}\left( {\mathbf{v}} \right)\left( {\sum\limits_{j \ne i}^{2K} {{{\mathbf{h}}_{Ij}}{x_j}}  + {{\mathbf{n}}_r}} \right) + \sum\limits_{j \ne i}^{2K} {{h_{ij}}{x_j}}  + {n_i}\\
&= \left( {{\mathbf{a}}_{i\left( {K + i} \right)}^H{\mathbf{v}} + {h_{i\left( {K + i} \right)}}} \right){x_{\left( {K + i} \right)}} + {\mathbf{h}}_{iI}^H{\text{diag}}\left( {\mathbf{v}} \right){{\mathbf{n}}_r} + {n_i}\\
& + \underbrace {\sum\limits_{j \in [1,K],\hfill\atop
j \ne i\hfill} {\left( {{\bf{a}}_{ij}^H{\bf{v}} + {h_{ij}}} \right){x_j}} }_{{\rm{Same - side\, interference}}} + \underbrace {\sum\limits_{j \in [K + 1,2K],\hfill\atop
j \ne K + i\hfill} {\left( {{\bf{a}}_{ij}^H{\bf{v}} + {h_{ij}}} \right){x_j}} }_{{\rm{Cross- side \,interference}}}
\end{aligned}
\label{eq1},
\end{equation}
where ${n_i} \sim \mathcal{C}\mathcal{N}\left( {0,\sigma^2} \right)$ is the additive white Gaussian noise (AWGN) at the user $i$, ${\mathbf{a}}_{ij}^H = {\mathbf{h}}_{iI}^H\rm{diag}\left( {{{\mathbf{h}}_{Ij}}} \right) \in {\mathbb{C}^{1 \times N}}$ is the cascaded reflective channel from user $j$ to user $i$ and can be approximated as Gaussian distribution, i.e., ${{\bf{a}}_{ij}} \sim {\cal C}{\cal N}\left( {{\bf{0}},\sigma _1^2{{\bf{I}}_N}} \right)$ with ${\sigma _1} = {\tau _1}{\tau _2}$ \cite{9854102}. 

\section{Required Number of REs for Full-DoF}
\label{S_IV}
Active RIS can adjust both the phase and amplitude of incident signals, but its performance is limited by the available reflection power and the per-RE amplification gain. This section studies how these constraints affect the number of active REs required for interference-free transmission. We first formulate an equivalent feasibility problem for achieving full DoF. Then, by examining the geometric structure of the interference-nulling equations and the associated constraints, and using tools from high-dimensional probability, we derive the necessary and sufficient conditions on the number of REs.

\subsection{Equivalent Feasibility Problem}
Similar to the passive RIS, the active RIS eliminates interference by adjusting the reflected signals so that all undesired signals are destructively combined at each user. Unlike its passive counterpart, the active RIS can also amplify the incident signals. The amplification, however, is limited by each RE’s maximum gain and the total available reflection power. To account for these hardware constraints, the interference-free transmission can be expressed as
\begin{equation}
\begin{aligned}
  {\text{find }}{\mathbf{v}} \hfill \\
  {\text{s}}{\text{.t}}{\text{. }}{S_1} &= \left\{ {{\mathbf{v}}\left| {{{\mathbf{A}}^H}{\mathbf{v}} + {\mathbf{b}} = 0} \right.} \right\}, \hfill \\
  {S_3} &= \left\{ {{\mathbf{v}}\left| {{{\left| {{v_i}} \right|}^2} \leqslant \alpha^2,\forall i \in \left[ {1,N} \right]} \right.} \right\}, \hfill \\
  {S_4} &= \left\{ {{\mathbf{v}}\left| {\sum\limits_{i = 1}^{2K} {{P_i}{{\left| {{\text{diag}}\left( {\mathbf{v}} \right){{\mathbf{h}}_{Ii}}} \right|}^2}}  + \sigma _r^2{{\left| {\mathbf{v}} \right|}^2} \leqslant {P_r}} \right.} \right\},
\end{aligned}
\label{eq_feasibility}
\end{equation}
where ${S_1}$ corresponds to inter-user interference cancellation, with each row representing an interference-nulling equation (e.g., ${{\bf{a}}_{ij}^H{\bf{v}} + {h_{ij}}}=0$ ). 
Moreover, given the reciprocity of these channels (${h_{ij}} = {h_{ji}}$, ${{\bf{a}}_{ij}} = {{\bf{a}}_{ji}}$), each side (left or right) has $\frac{{K\left( {K - 1} \right)}}{2}$ distinct same-side interference nulling equations among its users, while between the two sides, there are $K\left( {K - 1} \right)$ distinct cross-interference nulling equations. Thus, each column of ${\mathbf{A}} \in {\mathbb{C}^{N \times 2K\left( {K - 1} \right)}}$ is independently distributed as ${{\mathbf{a}}_{ij}} \sim \mathcal{C}\mathcal{N}\left( {0,\sigma _1^2{{\mathbf{I}}_N}} \right)$, and ${\bf{b}} = {[{\bf{b}}_1^T,{\bf{b}}_2^T]^T} \in {\mathbb{C}^{2K\left( {K - 1} \right) \times 1}}$ with ${{\bf{b}}_1} \sim {\cal C}{\cal N}\left( {0,\sigma _2^2{{\bf{I}}_{K\left( {K - 1} \right)}}} \right)$, ${{\bf{b}}_2} \sim {\cal C}{\cal N}\left( {0,\sigma _3^2{{\bf{I}}_{K\left( {K - 1} \right)}}} \right)$. 
${S_3}$ ensures that the amplification gain of each active RE does not exceed the maximum value ${\alpha^2}$ allowed by the amplifier circuit. ${S_4}$ constrains the total power of the amplified reflective signals, including both the incident signals and the thermal noise ${{\bf{n}}_r} \sim {\cal C}{\cal N}\left( {{\bf{0}},\sigma _r^2{{\bf{I}}_N}} \right)$, to be no greater than the maximum reflection power ${{P_r}}$ available to the active RIS \cite{ZhouGui,9998527,10706811}.
Furthermore, ${S_4}$ can be rewritten as
\begin{equation}
{S_4} = \left\{ {{\mathbf{v}}\left| {{{\mathbf{v}}^H}{\mathbf{Wv}} \leqslant {P_r}} \right.} \right\},
\label{eq32}
\end{equation}
with 
\begin{equation}
\begin{aligned}
{\mathbf{W}} = {\text{diag}}\left( {\sum\limits_{i = 1}^{2K} {{{\left| {{h_{Ii,1}}} \right|}^2}{P_i}}  + \sigma _r^2, \cdots ,\sum\limits_{i = 1}^{2K} {{{\left| {{h_{Ii,N}}} \right|}^2}{P_i}}  + \sigma _r^2} \right)
\label{eq_W}, 
\end{aligned}
\end{equation}
and ${{h_{Ii,n}}}$ is the $n$-th element of ${{{\mathbf{h}}_{Ii}}}$.

Therefore, we can employ the alternating projection algorithm \cite{jiang2022interference} to solve it as follows
\begin{equation}
\left\{ \begin{gathered}
  {\mathbf{\hat v}} = {{\mathbf{v}}^{\left( t \right)}} - {\mathbf{A}}{\left( {{{\mathbf{A}}^H}{\mathbf{A}}} \right)^{ - 1}}\left( {{{\mathbf{A}}^H}{{\mathbf{v}}^{\left( t \right)}} + {\mathbf{b}}} \right) \hfill \\
  {\mathbf{\tilde v}}:{{\tilde v}_i}\mathop  = \limits^{(a)} \left\{ \begin{gathered}
  {{\hat v}_i},{\text{ if}}\left| {{{\hat v}_i}} \right| \leqslant \alpha  \hfill \\
  \alpha \frac{{{{\hat v}_i}}}{{\left| {{{\hat v}_i}} \right|}},{\text{ else}} \hfill \\ 
\end{gathered}  \right.,\forall i \in \left[ {1,N} \right] \hfill \\
  {{\mathbf{v}}^{\left( {t + 1} \right)}}\mathop  = \limits^{(b)} \left\{ \begin{gathered}
  {\mathbf{\tilde v}},{\text{ if }}{{{\mathbf{\tilde v}}}^H}{\mathbf{W\tilde v}} \leqslant {P_r} \hfill \\
  {\mathbf{\tilde v}}\sqrt {\frac{{{P_r}}}{{{{{\mathbf{\tilde v}}}^H}{\mathbf{W\tilde v}}}}} ,{\text{ else.}} \hfill \\ 
\end{gathered}  \right. \hfill \\ 
\end{gathered}  \right. 
\label{eq_AP}
\end{equation}
where ${{\mathbf{v}}^{\left( t \right)}}$ represents the vector ${\mathbf{v}}$ at the $t$-th iteration, ${\hat v}_i$ is the $i$-th element of ${\mathbf{\hat v}}$.
Operations (a) and (b) constrain the element-wise amplitude of ${\mathbf{v}}$ within $\alpha$ and the reflection power below $P_r$, satisfying ${S_3}$ and ${S_4}$, respectively.
The convergence of \eqref{eq_AP} follows a similar proof as in \cite{jiang2022interference} and is omitted for brevity.

After obtaining the optimal $\mathbf{v}$, we adopt the DoF as the performance metric, as it quantifies the number of interference-free data streams and serves as a first-order approximation of the system capacity \cite{6197715}. The system DoF is given by
\begin{equation}
    {\rm{DoF}} = \sum\limits_{i = 1}^{2K} {\mathop {\lim }\limits_{{\rm{SN}}{{\rm{R}}_i} \to \infty } \frac{{{R_i}}}{{\log \left( {{\rm{SN}}{{\rm{R}}_i}} \right)}}},
\end{equation}
where ${{\rm{SN}}{{\rm{R}}_i}}$ is the received SNR of user $i$, and the rate ${R_i}$ of user $i$ ($i \in \left[ {1,K} \right]$) is given by 
\begin{equation}
\begin{aligned}
&{R_i} = \\
&\log \left( {1 + \frac{{{P_i}{{\left| {{\mathbf{a}}_{i\left( {K + i} \right)}^H{\mathbf{v}} + {h_{i\left( {K + i} \right)}}} \right|}^2}}}{{\sum\limits_{{j \in [1,2K],} \atop {j \ne i,K + i}} {{P_j}{{\left| {{\mathbf{a}}_{ij}^H{\mathbf{v}} + {h_{ij}}} \right|}^2}}  + {{\left| {{\mathbf{h}}_{iI}^H{\text{diag}}\left( {\mathbf{v}} \right)} \right|}^2}\sigma _r^2 + {\sigma ^2}}}} \right)
\end{aligned}.
\end{equation}
Consequently, with complete interference suppression, the two-way $K$-user channel can attain a full DoF of $2K$.

\subsection{Necessary Condition on REs for Full DoF}
Note that a necessary condition for the intersection ${S_1} \cap {S_3} \cap {S_4}$ to be nonempty is that both ${S_1} \cap {S_3}$ and ${S_1} \cap {S_4}$ are nonempty. Based on this, we first decompose the feasibility problem \eqref{eq_feasibility} into two subproblems to analyze the intersections of ${S_1} \cap {S_3}$ and ${S_1} \cap {S_4}$, respectively. 
The former is addressed through the geometric properties of ${S_1}$ and ${S_3}$, while the latter is analyzed algebraically based on Gordon’s theorem \cite{gordon1988milman}. Finally, the results of these subproblems are combined to yield a necessary condition for the original problem.

Specifically, since
\begin{equation}
{S_1} \cap {S_3} \cap {S_4} \ne \varnothing  \Rightarrow {S_1} \cap {S_3} \ne \varnothing {\text{ and }}{S_1} \cap {S_4} \ne \varnothing,
\end{equation}
it follows that
\begin{equation}
{S_1} \cap {S_3} = \varnothing {\text{ or }}{S_1} \cap {S_4} = \varnothing  \Rightarrow {S_1} \cap {S_3} \cap {S_4} = \varnothing.
\end{equation}
Based on these properties, we decompose the problem \eqref{eq_feasibility} into two subproblems as follows
\begin{equation}
\begin{aligned}
{\text{find }}&{\mathbf{v}} \hfill \\
  {\text{s}}{\text{.t}}{\text{. }}&{\mathbf{v}} \in {S_1} \cap {S_3}, \hfill \\ 
\end{aligned}
\label{eq_subp1}
\end{equation}
and
\begin{equation}
\begin{aligned}
{\text{find }}&{\mathbf{v}} \hfill \\
  {\text{s}}{\text{.t}}{\text{. }}&{\mathbf{v}} \in {S_1} \cap {S_4}. \hfill \\ 
\end{aligned}
\label{eq_subp2}
\end{equation}
If the necessary conditions for subproblems \eqref{eq_subp1} and \eqref{eq_subp2} are $N \geqslant {N_1}$ and $N \geqslant {N_2}$, respectively, then we have 
\begin{equation}
N < \max \left( {{N_1},{N_2}} \right) \Rightarrow {S_1} \cap {S_3} \cap {S_4} = \varnothing
\label{eq44},
\end{equation}
\begin{equation}
{\text{or }} {S_1} \cap {S_3} \cap {S_4} \ne \varnothing  \Rightarrow N \geqslant \max \left( {{N_1},{N_2}} \right)
.
\end{equation}

Without loss of generality, we assume that all users transmit with the same power, i.e., ${P_i} = P,\forall i \in \left[ {1,2K} \right]$. Moreover, for the set ${S_4}$ given in \eqref{eq32}, since 
\begin{equation}
{{\mathbf{h}}_{Ii}} \sim \left\{ \begin{gathered}
  \mathcal{C}\mathcal{N}\left( {{\mathbf{0}},\tau _1^2{{\mathbf{I}}_N}} \right),{\text{for }}i \in \left[ {1,K} \right] \hfill \\
  \mathcal{C}\mathcal{N}\left( {{\mathbf{0}},\tau _2^2{{\mathbf{I}}_N}} \right),{\text{for }}i \in \left[ {K + 1,2K} \right] \hfill \\ 
\end{gathered}  \right.
,
\end{equation}
it follows that $\sum\limits_{i = 1}^K {{{\left| {{h_{Ii,n}}} \right|}^2}}  \sim {\text{Gamma}}\left( {K,\frac{1}{{\tau _1^2}}} \right)$.
For a large shape parameter $K$, the Gamma distribution can be well approximated by a normal distribution
\cite{grover2014application}, i.e., $\sum\limits_{i = 1}^K {{{\left| {{h_{Ii,n}}} \right|}^2}} \mathop  \sim \limits^{{\text{large }}K} \mathcal{C}\mathcal{N}\left( {K\tau _1^2,K\tau _1^4} \right)$. According to the empirical rule of normal distribution, about $99.7\% $ of the data fall within $3$ standard deviations of the mean \cite{ozdemir2016principles}. Therefore, for large $K$, almost all values of $\sum\limits_{i = 1}^K {{{\left| {{h_{Ii,n}}} \right|}^2}}$ can be expressed as 
\begin{equation}
\sum\limits_{i = 1}^K {{{\left| {{h_{Ii,n}}} \right|}^2}}  = K\tau _1^2 + c\sqrt {K\tau _1^4} \mathop  \approx \limits^{K \to \infty } K\tau _1^2,{\text{ }}\left| c \right| \leqslant 3.
\end{equation}
Thus, ${\mathbf{W}}$ in \eqref{eq_W} can be further simplified as\footnote{This approximation does not alter the order of the required number of REs and has been validated to perform well in the simulations.}
\begin{equation}
{\mathbf{W}} \mathop  \approx \limits^{K \to \infty } \left( {KP\left( {\tau _1^2 + \tau _2^2} \right) + \sigma _r^2} \right){{\mathbf{I}}_N}.
\end{equation}

Then the constraints of problem \eqref{eq_feasibility} can be rewritten as
\begin{equation}
\begin{aligned}
  &{S_1} = \left\{ {{\mathbf{v}}\left| {{\mathbf{v}} =  - {{\left( {{{\mathbf{A}}^H}} \right)}^ + }{\mathbf{b}} + {\text{null}}\left( {{{\mathbf{A}}^H}} \right)} \right.} \right\}, \hfill \\
  &{S_3} = \left\{ {{\mathbf{v}}\left| {\left| {{v_i}} \right| \leqslant \alpha ,\forall i \in \left[ {1,N} \right]} \right.} \right\}, \hfill \\
  &{S_4} = \left\{ {{\mathbf{v}}\left| {{{\mathbf{v}}^H}{\mathbf{v}} \leqslant \frac{{{P_r}}}{{KP\left( {\tau _1^2 + \tau _2^2} \right) + \sigma _r^2}} \triangleq \beta } \right.} \right\}. \hfill \\ 
\end{aligned}
\label{eq39}
\end{equation}
where ${\left( {{{\mathbf{A}}^H}} \right)^ + } = {\mathbf{A}}{\left( {{{\mathbf{A}}^H}{\mathbf{A}}} \right)^{ - 1}}$. 

Moreover, ${S_1}$ can be viewed as an affine subspace obtained by shifting the null space of ${\mathbf{A}}^H$ by the vector ${\left( {{{\mathbf{A}}^H}} \right)^+}{\mathbf{b}}$, and ${S_3}$ consists of the lines contained in the complex ball of radius $\alpha \sqrt{N}$. To determine a necessary condition for ${S_1} \cap {S_3} \ne \varnothing$, we establish the following theorem.
\begin{theorem}
Let 
${{\bar S}_1} = \left\{ {\left. {{\mathbf{x}} \in {\mathbb{C}^{N}}} \right|{\mathbf{x}} = {{\mathbf{x}}_0} + {\mathbf{z}},{\mathbf{z}} \in {\text{null}}\left( {\mathbf{G}} \right)} \right\}$, \\
${{\bar S}_3} = \left\{ {\left. {{\mathbf{x}} \in {\mathbb{C}^{N}}} \right|\left| {{x_i}} \right| \leqslant  \gamma ,\forall i = 1, \cdots ,N} \right\}$, 
where ${\mathbf{G}} \in {\mathbb{C}^{L \times N}}$ and ${{\mathbf{x}}_0} \in {\mathbb{C}^{N}}$ are given, and $L \leqslant N$.
Then a necessary condition for ${{\bar S}_1} \cap {{\bar S}_3} \ne \varnothing $ is $\left| {{\text{Pro}}{{\text{j}}_{{{\left( {{\text{null}}\left( {\mathbf{G}} \right)} \right)}^ \bot }}}\left( {{{\mathbf{x}}_0}} \right)} \right|  \leqslant  \gamma\sqrt N $.
\label{cor1}
\end{theorem}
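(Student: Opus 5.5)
The plan is to argue by contrapositive through a chain of elementary norm inequalities: take any point of the intersection, bound its Euclidean norm, and compare it with the norm of its component orthogonal to the null space. Suppose ${{\bar S}_1} \cap {{\bar S}_3} \ne \varnothing$ and pick ${\mathbf{x}}$ in the intersection. Since ${\mathbf{x}} \in {{\bar S}_3}$, every coordinate satisfies $\left| {{x_i}} \right| \leqslant \gamma$. Hence ${\left| {\mathbf{x}} \right|^2} = \sum_{i=1}^N {\left| {{x_i}} \right|^2} \leqslant N{\gamma ^2}$, so ${{\bar S}_3}$ lies inside the complex ball of radius $\gamma\sqrt N$. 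This is the geometric remark made just before the theorem.

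Next we use that ${\mathbf{x}} \in {{\bar S}_1}$, i.e.\ ${\mathbf{x}} = {{\mathbf{x}}_0} + {\mathbf{z}}$ with ${\mathbf{z}} \in {\text{null}}\left( {\mathbf{G}} \right)$. Decompose $\mathbb{C}^N = {\text{null}}\left( {\mathbf{G}} \right) \oplus {\left( {{\text{null}}\left( {\mathbf{G}} \right)} \right)^ \bot }$ orthogonally and write ${\mathbf{P}}$ for the orthogonal projector onto ${\left( {{\text{null}}\left( {\mathbf{G}} \right)} \right)^ \bot }$. Because ${\mathbf{Pz}} = {\mathbf{0}}$, we have ${\mathbf{Px}} = {\mathbf{P}}{{\mathbf{x}}_0}$. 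That is, every point of the affine subspace ${{\bar S}_1}$ has the same orthogonal component, and this common component is exactly the minimum-norm point of ${{\bar S}_1}$. By the Pythagorean identity,
\begin{equation*}
{\left| {\mathbf{x}} \right|^2} = {\left| {{\mathbf{Px}}_0} \right|^2} + {\left| {\left( {{\mathbf{I}} - {\mathbf{P}}} \right){\mathbf{x}}} \right|^2} \geqslant {\left| {{\text{Pro}}{{\text{j}}_{{{\left( {{\text{null}}\left( {\mathbf{G}} \right)} \right)}^ \bot }}}\left( {{{\mathbf{x}}_0}} \right)} \right|^2}.
\end{equation*}

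Combining the two bounds gives $\left| {{\text{Pro}}{{\text{j}}_{{{\left( {{\text{null}}\left( {\mathbf{G}} \right)} \right)}^ \bot }}}\left( {{{\mathbf{x}}_0}} \right)} \right| \leqslant \left| {\mathbf{x}} \right| \leqslant \gamma \sqrt N$, which is the claimed necessary condition.

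There is no genuine obstacle here; the assumption $L \leqslant N$ is not needed for the inequality itself. The only point needing care is that the projection must be onto ${\left( {{\text{null}}\left( {\mathbf{G}} \right)} \right)^ \bot }$, which equals the row space of ${\mathbf{G}}$ (the range of ${\mathbf{G}}^H$), and not onto ${\text{null}}\left( {\mathbf{G}} \right)$. When ${\mathbf{G}}$ has full row rank, this projection equals ${{\mathbf{G}}^H}{\left( {{\mathbf{G}}{{\mathbf{G}}^H}} \right)^{ - 1}}{\mathbf{G}}{{\mathbf{x}}_0}$. This identification links the statement to \eqref{eq39}: there ${\mathbf{G}} = {{\mathbf{A}}^H}$, $\gamma = \alpha$, and the projected point is $-{\left( {{{\mathbf{A}}^H}} \right)^ + }{\mathbf{b}}$.
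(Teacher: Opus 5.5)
Your proof is correct and follows essentially the same route as the paper's appendix proof. Both write each point of $\bar S_1$ as the fixed component $\mathrm{Proj}_{(\mathrm{null}(\mathbf{G}))^\perp}(\mathbf{x}_0)$ plus an orthogonal null-space component, apply the Pythagorean identity, and compare the result with the bound $|\mathbf{x}|^2 \le N\gamma^2$ that the box constraint implies.
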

\begin{proof}
	Please see Appendix \ref{appc1}.
\end{proof}
A geometric interpretation of Theorem~\ref{cor1} is that if the distance from the affine subspace (${{\bar S}_1}$) to the origin exceeds the ball’s radius, the subspace cannot intersect the ball or any subset (e.g, ${{\bar S}_3}$) contained in it.
Therefore, we have that  if
\begin{equation}
\begin{aligned}
\alpha \sqrt N  &<\left| {{\text{Pro}}{{\text{j}}_{{{\left( {{\text{null}}\left( {{{\mathbf{A}}^H}} \right)} \right)}^ \bot }}}\left( {{{\left( {{{\mathbf{A}}^H}} \right)}^ + }{\mathbf{b}}} \right)} \right| \\
&= \left| {{\mathbf{A}}{{\left( {{{\mathbf{A}}^H}{\mathbf{A}}} \right)}^{ - 1}}{{\mathbf{A}}^H} \cdot \left( {{\mathbf{A}}{\left( {{{\mathbf{A}}^H}{\mathbf{A}}} \right)^{ - 1}}{\mathbf{b}}} \right)} \right|
= \left| {{{\left( {{{\mathbf{A}}^H}} \right)}^ + }{\mathbf{b}}} \right|,    
\end{aligned}
\label{eq_necessary_1} 
\end{equation}
then subproblem \eqref{eq_subp1} is unsolvable (i.e., ${S_1} \cap {S_3} = \varnothing$). 
According to the norm concentration \cite{vershynin2018high}, when $N$ is large, we can approximate $\left|{{{\left( {{{\bf{A}}^H}} \right)}^ + }{\bf{b}}}\right|$ by its expected value, i.e.,
\begin{equation}
\begin{aligned}
{\left| {{{\left( {{{\bf{A}}^H}} \right)}^ + }{\bf{b}}} \right|} & \approx \sqrt {E\left( {tr\left( {{\bf{b}}{{\bf{b}}^H}{{\left( {{{\bf{A}}^H}{\bf{A}}} \right)}^{ - 1}}} \right)} \right)} \\
 &\mathop  = \limits^{(a)}  \sqrt {\frac{{K\left( {K - 1} \right)\left( {\eta _1^2 + \eta _2^2} \right)}}{{N - 2K\left( {K - 1} \right) - 1}}}, 
\end{aligned}  
\end{equation}
where ${\eta _1} = \frac{{{\sigma _2}}}{{{\sigma _1}}}$ and ${\eta _2} = \frac{{{\sigma _3}}}{{{\sigma _1}}}$ are the strength ratios of the same-side and cross-side direct links to the cascaded reflective link, respectively. The equality (a) follows from the fact that ${{{\left( {{{\mathbf{A}}^H}{\mathbf{A}}} \right)}^{ - 1}}}$ follows an inverse Wishart distribution.

Therefore, \eqref{eq_necessary_1} can be rewritten as
\begin{equation}
\alpha \sqrt N  < \left| {{{\left( {{{\mathbf{A}}^H}} \right)}^ + }{\mathbf{b}}} \right| \Leftrightarrow \alpha \sqrt N  < \sqrt {\frac{{K\left( {K - 1} \right)\left( {\eta _1^2 + \eta _2^2} \right)}}{{N - 2K\left( {K - 1} \right) - 1}}}
\label{eq_necessary_11}, 
\end{equation}
The solution of \eqref{eq_necessary_11} with respect to $N$ is given by
\begin{equation}
\begin{aligned}
N < &\frac{1}{2}\sqrt {{{\left( {2K\left( {K - 1} \right) + 1} \right)}^2} + \frac{4}{{{\alpha ^2}}}K\left( {K - 1} \right)\left( {\eta _1^2 + \eta _2^2} \right)} \\
 & + \frac{1}{2}\left( {2K\left( {K - 1} \right) + 1} \right) \triangleq {N_1}
\end{aligned}
\label{eq46}.   
\end{equation}

For the subproblem \eqref{eq_subp2}, the set $S_4$ is a ball of radius $\sqrt \beta$,  while $S_1$ is a hyperplane located at a distance $\left| {{{\left( {{{\mathbf{A}}^H}} \right)}^ + }{\mathbf{b}}} \right|$ from the origin. 
Geometrically, if $\sqrt \beta   \geqslant \left| {{{\left( {{{\mathbf{A}}^H}} \right)}^ + }{\mathbf{b}}} \right|$, the intersection ${S_1} \cap {S_4}$ is nonempty (i.e., ${S_1} \cap {S_4} \ne \varnothing$) with probability one; otherwise, if $\sqrt \beta   < \left| {{{\left( {{{\mathbf{A}}^H}} \right)}^ + }{\mathbf{b}}} \right|$, then ${S_1} \cap {S_4} = \varnothing$ with probability one. This indicates that $\sqrt \beta   = \left| {{{\left( {{{\mathbf{A}}^H}} \right)}^ + }{\mathbf{b}}} \right|$ serves as a \textit{phase transition point} determining whether $S_1$ and $S_4$ intersect.
In other words, even \textit{infinitesimal perturbations} in the values of $\beta$ or $\left| {{{\left( {{{\mathbf{A}}^H}} \right)}^ + }{\mathbf{b}}} \right|$ can trigger a discontinuous shift in the intersection status between $S_1$ and $S_4$—from nonempty to empty, or vice versa.
However, due to the randomness of ${\mathbf{A}}$ and ${\mathbf{b}}$, it is difficult to derive necessary and sufficient conditions on $N$ directly from $\sqrt \beta   = \left| {{{\left( {{{\mathbf{A}}^H}} \right)}^ + }{\mathbf{b}}} \right|$. 
To overcome this challenge, we leverage the norm concentration phenomenon \cite{vershynin2018high} in high-dimensional statistics, approximating $\left| {{{\left( {{{\mathbf{A}}^H}} \right)}^ + }{\mathbf{b}}} \right|$ by its expected value $E\left( {\left| {{{\left( {{{\mathbf{A}}^H}} \right)}^ + }{\mathbf{b}}} \right|} \right)$. 
This is justified by the fact that, as the dimension increases (i.e., $N \to \infty $), the norm $\left| {{{\left( {{{\mathbf{A}}^H}} \right)}^ + }{\mathbf{b}}} \right|$ becomes highly concentrated around its mean \cite{vershynin2018high}.

However, since the actual values of ${\left| {{{\left( {{{\mathbf{A}}^H}} \right)}^ + }{\mathbf{b}}} \right|}$ are tightly and symmetrically concentrated around the expected value $E\left( {\left| {{{\left( {{{\mathbf{A}}^H}} \right)}^ + }{\mathbf{b}}} \right|} \right)$, roughly half of the realizations exceed the expectation, while the rest fall below.
Therefore, using $\sqrt \beta   = E\left( {\left| {{{\left( {{{\mathbf{A}}^H}} \right)}^ + }{\mathbf{b}}} \right|} \right)$ to determine the required $N$ yields a condition under which the affine subspace $S_1$ intersects the ball $S_4$ with a probability close to $50\%$.
This $50\%$ phenomenon can also be found in \cite{amelunxen2014living}.
\begin{remark}
At first glance, this $50\%$ phenomenon stated here seems to contradict the derivation of the necessary conditions for ${S_1} \cap {S_3} \ne \varnothing$, in which $\left| {{{\left( {{{\mathbf{A}}^H}} \right)}^ + }{\mathbf{b}}} \right|$ is approximated by its expected value, as shown in \eqref{eq_necessary_11}. However, since ${S_3}$ is a set of 'lines' (${S_3}: \left\{ {{\mathbf{v}}\left| {\left| {{v_i}} \right| \leqslant \alpha } \right.} \right\}$) contained in complex ball (${\text{B:}}\left\{ {\left. {\mathbf{v}} \right|\left| {\mathbf{v}} \right| \leqslant \alpha \sqrt N {\text{ }}} \right\}$), the 'volume ratio' of them tends to zero, i.e., $\frac{{{\text{vol}}\left( {{S_3}} \right)}}{{{\text{vol}}\left( {\text{B}} \right)}}\mathop  \to \limits^{N \to \infty } 0$.
Thus, when $S_1$ intersects the ball with only $50\%$ probability, determined by $\alpha \sqrt N  = E\left( {\left| {{{\left( {{{\mathbf{A}}^H}} \right)}^ + }{\mathbf{b}}} \right|} \right)$, the probability that $S_1$ also intersects $S_3$ is nearly zero. Thus, the derivations of \eqref{eq_necessary_11} are reasonable.
\end{remark}
To overcome this $50\%$ phenomenon, stemming from the \textit{phase transition point} at $\sqrt \beta   = \left| {{{\left( {{{\mathbf{A}}^H}} \right)}^ + }{\mathbf{b}}} \right|$, we derive a necessary condition for the subproblem \eqref{eq_subp2} by analyzing the algebraic relationship between $\left| {{{\mathbf{A}}^H}{\mathbf{v}}} \right|$ and $\left| {\mathbf{b}} \right|$. Specifically, if the maximum magnitude of the linear combinations of the columns of  ${{{\mathbf{A}}^H}}$ over the feasible set $S_4$ is less than $\left| {\mathbf{b}} \right|$, i.e.,
\begin{equation}
\mathop {\max }\limits_{{\mathbf{v}} \in {S_4}} \left( {\left| {{{\mathbf{A}}^H}{\mathbf{v}}} \right|} \right) < \left| {\mathbf{b}} \right| 
\label{eq_necessary_2},
\end{equation}
then ${{{\mathbf{A}}^H}{\mathbf{v}} + {\mathbf{b}} = 0}$ cannot be satisfied for any ${\mathbf{v}} \in S_4$.
Consequently, ${S_1} \cap {S_4} = \varnothing$, and subproblem \eqref{eq_subp2} is unsolvable.

Since the maximum value of ${\left| {{{\mathbf{A}}^H}{\mathbf{v}}} \right|}$ over ${\mathbf{v}} \in {S_4}$ plays a key role in the subsequent derivation, we develop a modified version of Gordon’s theorem to characterize it as follows.
\begin{theorem}
Let ${\mathbf{G}} \in {\mathbb{C}^{L \times N}}$ be a random matrix with independent $\mathcal{C}\mathcal{N}\left( {0,{\sigma ^2}} \right)$ elements, $S = \left\{ {\left. {\bf{x}} \right|{{\bf{x}}^H}{\bf{x}} \leqslant \gamma  } \right\}$. Then
	\begin{equation}
	\left\{ {\begin{array}{*{20}{c}}
		{E\left( {\mathop {\min }\limits_{{\bf{x}} \in S} \left| {{\bf{Gx}}} \right|} \right) \ge \sigma \sqrt \gamma \left( {\sqrt L  - \sqrt N } \right)}\\
		{E\left( {\mathop {\max }\limits_{{\bf{x}} \in S} \left| {{\bf{Gx}}} \right|} \right) \le \sigma \sqrt \gamma \left( {\sqrt L  + \sqrt N } \right)}
		\end{array}} \right.
	\label{eq50}.    
	\end{equation}
	\label{the3}
\end{theorem}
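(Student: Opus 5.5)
The plan is to reduce both inequalities to the classical Gaussian comparison inequalities of Slepian and Gordon \cite{gordon1988milman}, applied to a suitably \emph{realified} bilinear Gaussian process, and then rescale by $\sigma\sqrt\gamma$.

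\textbf{Step 1: reduction to the sphere.} Since $\left|{\bf{Gx}}\right|$ is positively homogeneous in ${\bf x}$, the maximum over $S$ is attained on the sphere ${\bf x}^H{\bf x}=\gamma$. Thus
\[
\max_{{\bf x}\in S}\left|{\bf Gx}\right| = \sqrt\gamma\,\max_{\left|{\bf x}\right|=1}\left|{\bf Gx}\right|.
\]
For the minimum, the origin lies in $S$, so the statement is only meaningful (and is only used) when the minimum is taken over the boundary sphere. Moreover, the lower bound is vacuous whenever $L\le N$. I will therefore prove it for $\min_{\left|{\bf x}\right|=\sqrt\gamma}$ and note this convention explicitly. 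Writing ${\bf G}=\sigma{\bf \tilde G}$ with ${\bf \tilde G}$ having i.i.d.\ $\mathcal{CN}(0,1)$ entries, it remains to show
\[
E\max_{\left|{\bf x}\right|=1}\left|{\bf \tilde Gx}\right|\le \sqrt L+\sqrt N,
\qquad
E\min_{\left|{\bf x}\right|=1}\left|{\bf \tilde Gx}\right|\ge \sqrt L-\sqrt N .
\]

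\textbf{Step 2: Gaussian processes.} Use the dual representation
\[
\left|{\bf \tilde Gx}\right|=\max_{\left|{\bf u}\right|=1}\mathrm{Re}\left({\bf u}^H{\bf \tilde Gx}\right).
\]
Define the process $X_{{\bf u},{\bf x}}=\mathrm{Re}\left({\bf u}^H{\bf \tilde Gx}\right)$ on the product of the unit spheres of $\mathbb C^L$ and $\mathbb C^N$. Define the comparison process
\[
Y_{{\bf u},{\bf x}}=\mathrm{Re}\left({\bf u}^H{\bf g}\right)+\mathrm{Re}\left({\bf h}^H{\bf x}\right),
\qquad {\bf g}\sim\mathcal{CN}({\bf 0},{\bf I}_L),\ {\bf h}\sim\mathcal{CN}({\bf 0},{\bf I}_N)
\]
independent. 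Expanding the real and imaginary parts gives
\[
E\left(X_{{\bf u},{\bf x}}X_{{\bf u}',{\bf x}'}\right)=\tfrac12\mathrm{Re}\left(({\bf u}^H{\bf u}')({\bf x}'^H{\bf x})\right),
\qquad
E\left(Y_{{\bf u},{\bf x}}Y_{{\bf u}',{\bf x}'}\right)=\tfrac12\mathrm{Re}\left({\bf u}^H{\bf u}'\right)+\tfrac12\mathrm{Re}\left({\bf x}'^H{\bf x}\right).
\]
Using $\left|{\bf u}\right|=\left|{\bf x}\right|=1$, one obtains the key inequality
\[
E\left|Y_{{\bf u},{\bf x}}-Y_{{\bf u}',{\bf x}'}\right|^2-E\left|X_{{\bf u},{\bf x}}-X_{{\bf u}',{\bf x}'}\right|^2=\mathrm{Re}\left((1-{\bf u}^H{\bf u}')(1-{\bf x}'^H{\bf x})\right)\ \ge 0 .
\]
It holds with equality when ${\bf x}={\bf x}'$.

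\textbf{This is the step I expect to be the main obstacle.} In the real case the product is of two nonnegative reals. In the complex case the phases of $1-{\bf u}^H{\bf u}'$ and $1-{\bf x}'^H{\bf x}$ can misalign, so the real part of their product can be negative. I would fix this by quotienting out the common phase: replace ${\bf x}'$ by $e^{j\phi}{\bf x}'$, which leaves $\left|{\bf \tilde G x}'\right|$ unchanged. Concretely, restrict the index set to pairs where ${\bf u}^H{\bf \tilde G}{\bf x}$ is real, or equivalently parametrize by the realified $2L$- and $2N$-dimensional real unit spheres. On that set ${\bf x}'^H{\bf x}$ and ${\bf u}^H{\bf u}'$ can be taken with real parts only, and the argument reduces to the real Gordon setting. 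If that fails, the fallback is to realify ${\bf \tilde G}$ directly into the $2L\times 2N$ matrix $\begin{bmatrix}\mathrm{Re}&-\mathrm{Im}\\ \mathrm{Im}&\mathrm{Re}\end{bmatrix}$ and repeat Gordon's covariance check for this structured matrix.

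\textbf{Step 3: conclusion.} Given the increment inequality, Slepian's lemma yields $E\max X\le E\max Y$. Gordon's min–max theorem yields $E\min_{\bf x}\max_{\bf u}X\ge E\min_{\bf x}\max_{\bf u}Y$. Evaluating the comparison process,
\[
\max_{{\bf u},{\bf x}}Y=\left|{\bf g}\right|+\left|{\bf h}\right|,
\qquad
\min_{\bf x}\max_{\bf u}Y=\left|{\bf g}\right|-\left|{\bf h}\right|.
\]
The remaining estimates are $E\left|{\bf h}\right|\le\sqrt{E\left|{\bf h}\right|^2}=\sqrt N$ and, for the upper bound, $E\left|{\bf g}\right|\le\sqrt L$. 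For the lower bound, $E\left|{\bf g}\right|$ is a chi mean with $2L$ real degrees of freedom, which equals $\sqrt L$ up to an $O(1/\sqrt L)$ correction. Either absorb this correction, consistent with the asymptotic use of the result in the paper, or track it via $E\left|{\bf g}\right|\ge L/\sqrt{L+1/2}$. Multiplying through by $\sigma\sqrt\gamma$ gives \eqref{eq50}.
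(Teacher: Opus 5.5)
Your skeleton matches the paper's: rescale by $\sigma\sqrt\gamma$, then apply Gordon. The difference is that the paper just cites Gordon's theorem for the complex matrix $\tilde{\mathbf G}$, with $a_L=\sqrt L$ and $W(\tilde S)=E(|\mathbf g|)\to\sqrt N$, and never confronts the fact that the theorem concerns real Gaussian matrices. You try to rederive it, and the step you flag is a genuine gap that neither of your repairs closes.

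\textbf{The increment inequality fails.} Take $(\mathbf u',\mathbf x')=(e^{j\theta}\mathbf u,e^{-j\theta}\mathbf x)$ with $0<\theta<\pi/2$. Then $p=\mathbf u^H\mathbf u'=e^{j\theta}$ and $q=(\mathbf x')^H\mathbf x=e^{j\theta}$, so $\mathrm{Re}\big((1-p)(1-q)\big)=-4\sin^2(\theta/2)\cos\theta<0$. Equivalently, $E|X-X'|^2=2\sin^2\theta>4\sin^2(\theta/2)=E|Y-Y'|^2$ with $\mathbf x'\neq\mathbf x$. This violates both the Sudakov--Fernique hypothesis and the cross-group hypothesis of Gordon's min--max inequality.

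\textbf{Neither repair works.}
\begin{itemize}
\item Restricting to pairs with $\mathbf u^H\tilde{\mathbf G}\mathbf x$ real makes the index set depend on $\tilde{\mathbf G}$, which the comparison theorems do not allow.
\item A deterministic normalization that preserves the maximum can remove only the single common phase $(\mathbf u,\mathbf x)\mapsto(e^{j\phi}\mathbf u,e^{j\phi}\mathbf x)$ under which $X$ is invariant. Pairs with negative $\mathrm{Re}((1-p)(1-q))$ survive it, e.g.\ with the first entry of $\mathbf x$ fixed real positive.
\item Realifying changes nothing. With $\mathbf R=\frac{1}{\sqrt2}\begin{bmatrix}\mathbf A&-\mathbf B\\ \mathbf B&\mathbf A\end{bmatrix}$, the form $\mathbf u_R^T\mathbf R\mathbf x_R$ is literally $\mathrm{Re}(\mathbf u^H\tilde{\mathbf G}\mathbf x)$. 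Its covariance is still $\tfrac12(\mathrm{Re}\,p\,\mathrm{Re}\,q-\mathrm{Im}\,p\,\mathrm{Im}\,q)$, and $\mathbf R$ has dependent entries, so real Gordon does not apply to it.
\end{itemize}

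\textbf{What a correct argument needs.} Scaling $Y$ by $\sqrt c$ repairs the increments only when $c\ge2$ (let $\theta\to0$ above), which yields $\sqrt2(\sqrt L+\sqrt N)$. The crude bound $\|\tilde{\mathbf G}\|\le(\|\mathbf A\|+\|\mathbf B\|)/\sqrt2$ gives the same rigorous but lossy constant. That would suffice for the order-of-magnitude use of the max bound in \eqref{eq51}, the only half the paper uses. The stated constant, however, requires a genuinely complex-valued argument, which neither your proposal nor the paper's citation supplies.

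\textbf{Smaller points.}
\begin{itemize}
\item Your observation about the minimum is correct and is glossed over in the paper. Since $\mathbf 0\in S$, the literal lower bound reads $0\ge\sigma\sqrt\gamma(\sqrt L-\sqrt N)$, which is trivial for $L\le N$ and false for $L>N$. Moving to the sphere is therefore necessary.
\item Gordon's inequality needs equal variances. Add an independent $\mathcal N(0,\tfrac12)$ variable to $X$, whose variance is $\tfrac12$ versus $1$ for $Y$.
\item Absorbing the chi-mean correction is not a proof. The bounds $E|\mathbf g|\ge L/\sqrt{L+1/2}$ and $E|\mathbf h|\le\sqrt N$ together fall short of $\sqrt L-\sqrt N$. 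What works is that $n\mapsto\sqrt n-\Gamma(n+\tfrac12)/\Gamma(n)$ is decreasing, which gives $E|\mathbf g|-E|\mathbf h|\ge\sqrt L-\sqrt N$ whenever $L\ge N$.
\end{itemize}
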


\begin{proof}
	Please see Appendix~\ref{appC}.
\end{proof}
Moreover, since ${\bf{b}} = {[{\bf{b}}_1^T,{\bf{b}}_2^T]^T}$, where ${{\bf{b}}_1} \sim {\cal C}{\cal N}\left( {0,\sigma _2^2{{\bf{I}}_{K\left( {K - 1} \right)}}} \right)$ and ${{\bf{b}}_2} \sim {\cal C}{\cal N}\left( {0,\sigma _3^2{{\bf{I}}_{K\left( {K - 1} \right)}}} \right)$, the expectation value of its norm is 
\begin{equation}
E\left( {\left| {\mathbf{b}} \right|} \right) = \sqrt {K\left( {K - 1} \right)\left( {\sigma _2^2 + \sigma _3^2} \right)}.
\end{equation}
Therefore, according to the norm concentration, the inequality \eqref{eq_necessary_2} can be transformed as
\begin{equation}
{\sigma _1}\sqrt \beta  \left( {\sqrt {2K\left( {K - 1} \right)}  + \sqrt N } \right) < \sqrt {K\left( {K - 1} \right)\left( {\sigma _2^2 + \sigma _3^2} \right)}
\label{eq51}.
\end{equation}
The solution concerning $N$ is given by
\begin{equation}
N < {\left( {\sqrt {\frac{{K\left( {K - 1} \right)\left( {\eta _1^2 + \eta _2^2} \right)}}{\beta }}  - \sqrt {2K\left( {K - 1} \right)} } \right)^2} \triangleq {N_2}
\label{eq53},
\end{equation}
where $\beta  = \frac{{{P_r}}}{{KP\left( {\tau _1^2 + \tau _2^2} \right) + \sigma _r^2}}$ as in \eqref{eq39}, ${\eta _1} = \frac{{{\sigma _2}}}{{{\sigma _1}}}$ and ${\eta _2} = \frac{{{\sigma _3}}}{{{\sigma _1}}}$.

By substituting \eqref{eq46} and \eqref{eq53} into \eqref{eq44}, it follows that if
\begin{equation}
N < \max \left( {{N_1},{N_2}} \right) \triangleq {N_{{\text{nec}}}}
\label{eq54},
\end{equation}
where ${N_1}$ and ${N_2}$ are defined in \eqref{eq46} and \eqref{eq53}, respectively,
then ${S_1} \cap {S_3} \cap {S_4} = \varnothing$, and problem \eqref{eq_feasibility} has no solution. 
Furthermore, based on the derivations in Appendix \ref{appN_arn}, the necessary threshold ${N_{{\text{nec}}}}$ for the number of REs can be asymptotically characterized as
\begin{equation}
{N_{{\text{nec}}}} = \left\{ \begin{gathered}
  O\left( {{K^2}} \right),{\text{ if }}\eta  < {\text{min}}\left( {\sqrt \beta  ,\alpha K} \right) \hfill \\
  O\left( {\frac{K}{\alpha }\eta } \right),{\text{ if }}\alpha K < \eta  < \frac{\beta }{{\alpha K}}\hfill \\
  O\left( {\frac{{{K^2}}}{\beta }{\eta ^2}} \right),{\text{ otherwise}} \hfill \\ 
\end{gathered}  \right.
\label{eq_arn},
\end{equation}
where $\eta  = \sqrt {\eta _1^2 + \eta _2^2}$, and ${\eta _1} = \frac{{{\sigma _2}}}{{{\sigma _1}}}$ and ${\eta _2} = \frac{{{\sigma _3}}}{{{\sigma _1}}}$ can be taken as the strength ratios of the weakest same-side and cross-side direct links to the strongest cascaded reflective link, respectively. 
$\beta$ is defined in \eqref{eq39}.
If the number of REs is below this threshold, problem \eqref{eq_feasibility} is infeasible. Hence, ${N_{{\text{nec}}}}$ serves as a lower bound on the number of REs required for full DoF.

\subsection{Sufficient Condition on REs for Full DoF}
\label{SCA}
To facilitate the analysis of ${S_1} \cap {S_3} \cap {S_4}\ne \varnothing$, in this subsection, we first replace ${S_3} \cap {S_4}$ with a smaller but more tractable subset contained within it. Then, we derive a sufficient condition for ${S_1}$ to intersect this subset based on their geometric properties. Since this subset is contained in ${S_3} \cap {S_4}$, if ${S_1}$ intersects it, ${S_1}$ must also intersect ${S_3} \cap {S_4}$. Therefore, the derived condition also serves as a sufficient condition for ${S_1} \cap {S_3} \cap {S_4} \ne \varnothing$.

We first present the following theorem characterizing the intersection ${S_3} \cap {S_4}$, which underlies the subsequent derivation.
\begin{theorem}
Let ${S_3} = \left\{ {{\mathbf{v}} \in {\mathbb{C}^{ N}}\left| {\left| {{v_i}} \right| \leqslant \alpha ,\forall i \in \left[ {1, N} \right]} \right.} \right\}$ and ${S_4} = \left\{ {{\mathbf{v}} \in {\mathbb{C}^{ N}}\left| {{{\mathbf{v}}^H}{\mathbf{v}} \leqslant \beta } \right.} \right\}$, where $\alpha$ and $\beta$ are given constants. Then, the intersection ${S_3} \cap {S_4}$ satisfies:
\begin{equation}
{S_3} \cap {S_4} = \left\{ \begin{gathered}
  {S_3}{\text{, if }}\sqrt \beta   \geqslant \alpha \sqrt { N}  \hfill \\
  {S_4}{\text{, if }}\sqrt \beta   \leqslant \alpha  \hfill \\ 
\end{gathered}  \right.
\label{eq56}.
\end{equation}
Otherwise, if $\alpha  < \sqrt \beta   < \alpha \sqrt { N}$, there exists a subset 
\begin{equation}
{S_{{\text{sub}}}} = \left\{ {{\mathbf{v}} \in {\mathbb{C}^{ N}}\left| {\left| {{v_i}} \right| \leqslant \sqrt {\frac{\beta }{{ N}}} ,\forall i \in \left[ {1, N} \right]} \right.} \right\} \subseteq {S_3} \cap {S_4}.
\end{equation}

\label{the4}
\end{theorem}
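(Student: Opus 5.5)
The proof is elementary and consists of three set-inclusion arguments, each obtained by comparing the per-coordinate bound $|v_i|\le\alpha$ with the Euclidean bound $\mathbf v^H\mathbf v\le\beta$. The organizing facts are the inequalities $\max_i|v_i|\le\|\mathbf v\|\le\sqrt N\max_i|v_i|$, valid for every $\mathbf v\in\mathbb C^N$. I will use them in both directions and treat the three regimes of $\sqrt\beta$ separately.

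\emph{Case $\sqrt\beta\ge\alpha\sqrt N$.} It suffices to show $S_3\subseteq S_4$, since then $S_3\cap S_4=S_3$. Take $\mathbf v\in S_3$. Summing the constraints $|v_i|^2\le\alpha^2$ over $i$ gives $\mathbf v^H\mathbf v=\sum_{i=1}^N|v_i|^2\le N\alpha^2\le\beta$, so $\mathbf v\in S_4$. Geometrically, the polydisc $S_3$ is contained in the ball of radius $\alpha\sqrt N$, which in turn lies inside $S_4$.

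\emph{Case $\sqrt\beta\le\alpha$.} Here I show $S_4\subseteq S_3$. Take $\mathbf v\in S_4$. For each $i$ we have $|v_i|^2\le\sum_j|v_j|^2=\mathbf v^H\mathbf v\le\beta\le\alpha^2$, so $\mathbf v\in S_3$, and hence $S_3\cap S_4=S_4$.

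\emph{Case $\alpha<\sqrt\beta<\alpha\sqrt N$.} Take $\mathbf v\in S_{\text{sub}}$, so $|v_i|\le\sqrt{\beta/N}$ for all $i$.
\begin{itemize}
\item From $\sqrt\beta<\alpha\sqrt N$ we get $\sqrt{\beta/N}<\alpha$. Therefore $|v_i|\le\alpha$ for every $i$, and $\mathbf v\in S_3$.
\item Summing the squared coordinate bounds gives $\mathbf v^H\mathbf v\le N\cdot\beta/N=\beta$, so $\mathbf v\in S_4$.
\end{itemize}
Together these give $S_{\text{sub}}\subseteq S_3\cap S_4$. In fact, the coordinate bound $\sqrt{\beta/N}<\alpha$ only needs $\sqrt\beta<\alpha\sqrt N$; the lower condition $\alpha<\sqrt\beta$ serves only to separate this regime from the second case.

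None of the steps is a genuine obstacle; the only care needed is to state the direction of each inclusion correctly. In the middle regime the statement claims only that $S_{\text{sub}}$ is a subset of $S_3\cap S_4$, not that it equals it, so no reverse inclusion needs to be proved. This is consistent with the later use of $S_{\text{sub}}$ as a smaller, tractable inner approximation in the sufficiency argument.
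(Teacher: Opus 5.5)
Your proof is correct and follows essentially the same route as the paper's appendix proof. You show $S_3\subseteq S_4$ by summing the coordinate bounds when $\beta\ge N\alpha^2$, and $S_4\subseteq S_3$ via $|v_i|\le\sqrt{\beta}\le\alpha$. In the middle regime you obtain $S_{\text{sub}}\subseteq S_3$ from $\sqrt{\beta/N}<\alpha$ and $S_{\text{sub}}\subseteq S_4$ by summing. Your remark that only $\sqrt\beta<\alpha\sqrt N$ is actually used in that last case is also accurate.
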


\begin{proof}
Please see Appendix \ref{appE}.
\end{proof}

Consequently, a sufficient condition for ${S_1} \cap {S_3} \cap {S_4} \ne \varnothing$ can be derived by analyzing the following three cases.

\textbf{Case 1: $\sqrt \beta   \geqslant \alpha \sqrt {N}$}

In this case, since ${S_3} \cap {S_4} = {S_3}$, it follows that 
\begin{equation}
{S_1} \cap {S_3} \cap {S_4} = {S_1} \cap {S_3}.
\end{equation}
To determine a necessary condition for ${S_1} \cap {S_3} \ne \varnothing$, we establish the following theorem.
\begin{theorem}
Let
${{\bar S}_1} = \left\{ {\left. {{\mathbf{x}} \in {\mathbb{C}^{N}}} \right|{\mathbf{x}} = {{\mathbf{x}}_0} + {\mathbf{z}},{\mathbf{z}} \in {\text{null}}\left( {\mathbf{G}} \right)} \right\}$, \\
${{\bar S}_3} = \left\{ {\left. {{\mathbf{x}} \in {\mathbb{C}^{N}}} \right|\left| {{x_i}} \right| \leqslant  \gamma ,\forall i = 1, \cdots ,N} \right\}$, 
where ${\mathbf{G}} \in {\mathbb{C}^{L \times N}}$ and ${{\mathbf{x}}_0} \in {\mathbb{C}^{N}}$ are given, and $L \leqslant N$.
If $\left| {{\text{Pro}}{{\text{j}}_{{{\left( {{\text{null}}\left( {\mathbf{G}} \right)} \right)}^ \bot }}}\left( {{{\mathbf{x}}_0}} \right)} \right| \leqslant \frac{{\gamma \sqrt N }}{{\sqrt 2 }} $, then ${{\bar S}_1} \cap {{\bar S}_3} \ne \varnothing $.
\label{cor2}
\end{theorem}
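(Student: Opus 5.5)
The plan is to work with the minimum-norm point of $\bar S_1$ and then use the freedom in $\mathrm{null}(\mathbf G)$ to repair the few coordinates that violate the box. Let $\mathbf p=\mathrm{Proj}_{(\mathrm{null}(\mathbf G))^\perp}(\mathbf x_0)$. Since $\mathbf x_0-\mathbf p\in\mathrm{null}(\mathbf G)$, the point $\mathbf p$ lies in $\bar S_1$. Moreover, $\bar S_1=\{\mathbf p+\mathbf z:\mathbf z\in\mathrm{null}(\mathbf G)\}$, and $\mathbf p$ is orthogonal to every such $\mathbf z$. It therefore suffices to produce $\mathbf z\in\mathrm{null}(\mathbf G)$ with $|p_i+z_i|\le\gamma$ for all $i$.

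\textbf{Step 1 (counting).} Let $\mathcal B=\{i:|p_i|>\gamma\}$. A Markov/Chebyshev count gives
\[
\gamma^2|\mathcal B|<\sum_{i\in\mathcal B}|p_i|^2\le|\mathbf p|^2\le\frac{\gamma^2N}{2},
\]
so $|\mathcal B|<N/2$. Hence, under the hypothesis of Theorem~\ref{cor2}, strictly fewer than half of the coordinates of $\mathbf p$ leave the box. The factor $\frac{1}{\sqrt2}$ in the statement should enter exactly here, compared with the necessary condition of Theorem~\ref{cor1}.

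\textbf{Step 2 (repair via the null space).} Next I would look for $\mathbf z\in\mathrm{null}(\mathbf G)$ that pulls the bad coordinates back into the box without pushing the good ones out. The first attempt is to scale each bad coordinate onto the boundary, $z_i=-p_i(1-\gamma/|p_i|)$ for $i\in\mathcal B$, and to solve for the remaining coordinates from $\mathbf G\mathbf z=\mathbf 0$, i.e.\ $\mathbf G_{\mathcal B^c}\mathbf z_{\mathcal B^c}=-\mathbf G_{\mathcal B}\mathbf z_{\mathcal B}$. This system is solvable whenever $\mathbf G_{\mathcal B^c}$ has full row rank $L$, which holds generically (for Gaussian $\mathbf G$, almost surely) once $|\mathcal B^c|>N/2\ge L$; in the paper's regime $L=2K(K-1)\ll N$. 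Among the solutions I would take the minimum-norm one. I would then bound $|\mathbf z_{\mathcal B^c}|$ by $|\mathbf z_{\mathcal B}|\le|\mathbf p_{\mathcal B}|$ times the condition number of the relevant blocks. After that I would check the entrywise condition $|p_i+z_i|\le\gamma$ on $\mathcal B^c$, using the slack $\gamma-|p_i|$ together with the energy budget $|\mathbf p_{\mathcal B^c}|^2\le\gamma^2N/2-|\mathbf p_{\mathcal B}|^2$.

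\textbf{Step 3 (the obstacle).} The main difficulty is Step 2. It is not purely deterministic: if $\mathrm{null}(\mathbf G)$ is nearly aligned with coordinate axes, the correction can concentrate on a single good coordinate and leave the box. I would therefore make the statement hold in the paper's high-dimensional sense, consistent with how Theorems~\ref{cor1} and~\ref{the3} are used with norm concentration. For $\mathbf G$ with i.i.d.\ Gaussian entries, $\mathrm{null}(\mathbf G)$ is a uniformly random subspace, so the minimum-norm correction is delocalized. Its entries then have magnitude of order $|\mathbf z_{\mathcal B^c}|/\sqrt{N}$, which I expect to be $O(\gamma)$ times a factor depending on $L/N$. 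Should this entrywise control prove too weak, the fallback is to iterate, i.e.\ alternating projection between $\bar S_1$ and $\bar S_3$ as in \eqref{eq_AP}. The counting of Step 1 would then serve as the invariant showing that the bad set shrinks, and the factor $\frac{1}{\sqrt2}$ would provide the margin needed for convergence.
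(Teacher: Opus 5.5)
Your Step~1 is correct, but there is a genuine gap at Step~2, and it is more than missing estimates.

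\textbf{The statement needs an extra hypothesis.} Theorem~\ref{cor2} is false as a deterministic statement. Take $N=4$, $L=1$, $\mathbf G=\mathbf e_1^T$, $\mathbf x_0=\sqrt2\,\gamma\,\mathbf e_1$ with $\gamma>0$. Then $\left|\mathrm{Proj}_{(\mathrm{null}(\mathbf G))^\perp}(\mathbf x_0)\right|=\sqrt2\,\gamma=\gamma\sqrt N/\sqrt2$, yet every point of $\bar S_1$ has $x_1=\sqrt2\,\gamma>\gamma$. More generally, suppose $\mathrm{rank}(\mathbf G)=r>N/2+1$. Then the row space of $\mathbf G$ contains some $\mathbf u\neq\mathbf 0$ supported on $N-r+1<N/2$ coordinates. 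On $\bar S_1$ we have $\mathbf u^H\mathbf x=\mathbf u^H\mathbf x_0$ and $|\mathbf u^H\mathbf x|\le\|\mathbf u\|_1\max_i|x_i|$, where $\|\mathbf u\|_1=\sum_i|u_i|$ and $|\cdot|$ is the Euclidean norm as in the paper. Hence $\mathbf x_0=\gamma\sqrt{N/2}\,\mathbf u/|\mathbf u|$ is a counterexample.

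Your count gives $|\mathcal B|=1<N/2$ in the first example, so $|\mathcal B|<N/2$ is not the mechanism behind the theorem. You must state which probabilistic version you prove. The paper's regime does not give you $L\ll N$: in the user-limited case $N=O(K^2)$, and the simulations show $N$ close to $L=2K(K-1)$. The paper's own argument is a heuristic of a different kind. It uses random phases to claim that distinct elements of $\bar S_3$ are asymptotically orthogonal, then draws a cone-angle contradiction, getting $1/\sqrt2$ from $\theta'\ge\pi/2$. It does not supply your missing step either.

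\textbf{Why the one-shot repair cannot close.} Even for Gaussian $\mathbf G$ with $L\ll N$, a condition-number bound will not finish Step~2. Since $\mathbf p\perp\mathrm{null}(\mathbf G)$, every admissible $\mathbf z$ satisfies $\mathbf p^H\mathbf z=0$. For $\mathcal B\neq\varnothing$, your radial shrink on $\mathcal B$ therefore forces
\[
\mathbf p_{\mathcal B^c}^H\mathbf z_{\mathcal B^c}=\sum_{i\in\mathcal B}|p_i|\left(|p_i|-\gamma\right)>0,
\qquad
|\mathbf z_{\mathcal B^c}|\ge\frac{\sum_{i\in\mathcal B}|p_i|\left(|p_i|-\gamma\right)}{|\mathbf p_{\mathcal B^c}|}.
\]
When $\mathbf p$ is delocalized and $|\mathbf p|$ is near $\gamma\sqrt{N/2}$, this lower bound is of order $\gamma\sqrt N$, whatever $L/N$ is. So the good block must be pushed outward by a correction whose typical entry is a fixed fraction of $\gamma$. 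The $\ell_2$-minimal solution spreads this push with no regard to the slacks $\gamma-|p_i|$. A positive fraction of the good coordinates have smaller slack than that, so new violations generically appear.

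The iterative fallback has no invariant. Step~1 used $|\mathbf p|^2\le\gamma^2N/2$, which holds only at the minimum-norm point, because every other point of $\bar S_1$ has squared norm $|\mathbf p|^2+|\mathbf z|^2$. Convergence of alternating projections also yields only a closest pair, not an intersection.

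\textbf{The missing idea.} You need a slack-aware certificate, and duality provides one:
\[
\min_{\mathbf x\in\bar S_1}\max_i|x_i|=\max\left\{|\mathbf u^H\mathbf p| : \mathbf u\in\mathrm{range}(\mathbf G^H),\ \|\mathbf u\|_1\le1\right\}.
\]
Taking $\mathbf u=\mathbf p/\|\mathbf p\|_1$ shows that feasibility requires the aggregate slack condition $\sum_i|p_i|(\gamma-|p_i|)\ge0$, not a bound on a count. This is exactly the budget your radial repair draws on. The theorem holds for all $\mathbf x_0$ if and only if $\|\mathbf u\|_1\ge\sqrt{N/2}\,|\mathbf u|$ for every $\mathbf u$ in the row space of $\mathbf G$. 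For Gaussian $\mathbf G$ and small $L/N$ this should hold with high probability. Typical complex Gaussian vectors satisfy $\|\mathbf u\|_1\approx\frac{\sqrt\pi}{2}\sqrt N\,|\mathbf u|$, and $\frac{\sqrt\pi}{2}>\frac{1}{\sqrt2}$. Uniformity over the subspace should be obtainable from Gordon's theorem, the tool behind Theorem~\ref{the3}.
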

\begin{proof}
	Please see Appendix \ref{appc2}.
\end{proof}

Thus, according to Theorem~\ref{cor2} and \eqref{eq_necessary_1}, we have
\begin{equation}
\begin{aligned}
  &\alpha \sqrt N \geqslant \sqrt 2 \left| {{{\left( {{{\mathbf{A}}^H}} \right)}^ + }{\mathbf{b}}} \right| \\
  &\Leftrightarrow \sqrt {\frac{{K\left( {K - 1} \right)\left( {\eta _1^2 + \eta _2^2} \right)}}{{N - 2K\left( {K - 1} \right) - 1}}}  \leqslant \frac{{\alpha \sqrt N }}{{\sqrt 2 }}
   \Rightarrow {S_1} \cap {S_3} \ne \varnothing. \hfill \\ 
\end{aligned}
\end{equation}
Consequently, the sufficient condition for ${S_1} \cap {S_3} \cap {S_4} \ne \varnothing$ in this case can be expressed as
\begin{equation}
\begin{aligned}
  N \geqslant \frac{1}{2}\left( {a + \sqrt {{a^2} + \frac{8}{{{\alpha ^2}}}b} } \right) \triangleq {N_{{\text{suf1}}}} \hfill \\ 
\end{aligned}
\label{eq60},
\end{equation}
where $a = 2K\left( {K - 1} \right) + 1$ and $b = K\left( {K - 1} \right)\left( {\eta _1^2 + \eta _2^2} \right)$.

\textbf{Case 2: $\alpha  < \sqrt \beta   < \alpha \sqrt {N}$}

In this case, since ${S_{{\text{sub}}}} \subseteq {S_3} \cap {S_4}$, we have
\begin{equation}
{S_1} \cap {S_{{\text{sub}}}} \ne \varnothing  \Rightarrow {S_1} \cap {S_3} \cap {S_4} \ne \varnothing.
\end{equation}
Similarly, according to Theorem~\ref{cor2}, we have
\begin{equation}
\sqrt {\frac{{K\left( {K - 1} \right)\left( {\eta _1^2 + \eta _2^2} \right)}}{{N - 2K\left( {K - 1} \right) - 1}}}  \leqslant \sqrt {\frac{\beta }{2}}  \Rightarrow {S_1} \cap {S_{{\text{sub}}}} \ne \varnothing.
\end{equation}
Therefore, the sufficient condition for ${S_1} \cap {S_3} \cap {S_4} \ne \varnothing$ in this case is given by
\begin{equation}
\begin{aligned}
N \geqslant \frac{{2b}}{\beta } + a  \triangleq {N_{{\text{suf2}}}}
\end{aligned}
\label{eq63},
\end{equation}
where $a = 2K\left( {K - 1} \right) + 1$ and $b = K\left( {K - 1} \right)\left( {\eta _1^2 + \eta _2^2} \right)$.

\textbf{Case 3: $\sqrt \beta   \leqslant \alpha$}

In this case, since ${S_3} \cap {S_4} = {S_4}$, it follows that
\begin{equation}
{S_1} \cap {S_3} \cap {S_4} = {S_1} \cap {S_4}.
\end{equation}
Moreover, because ${S_{{\text{sub}}}} \subseteq {S_4}$, the sufficient condition \eqref{eq63} derived in Case 2 also holds in this case.

Therefore, by combining the results of the three cases and based on the derivations in Appendix \ref{appF}, a sufficient condition on the required number of  REs (i.e., $N$) to ensure ${S_1} \cap {S_3} \cap {S_4} \ne \varnothing$ can be expressed as
\begin{equation}
N \geqslant {N_{{\text{suf}}}} \triangleq \left\{ \begin{gathered}
  {N_{{\text{suf1}}}}{\text{, if }}\frac{\beta }{{{\alpha ^2}}} \geqslant {N_{{\text{suf1}}}} \hfill \\
  {N_{{\text{suf2}}}},{\text{ otherwise}} \hfill \\ 
\end{gathered}  \right.
\label{eq65},
\end{equation}
where $\beta=\frac{{{P_r}}}{{KP\left( {\tau _1^2 + \tau _2^2} \right) + \sigma _r^2}}$ as shown in \eqref{eq39}, ${N_{{\text{suf1}}}}$ and ${N_{{\text{suf2}}}}$ are defined in \eqref{eq60} and \eqref{eq63}, respectively.

Furthermore, based on the derivations in Appendix \ref{appN_ars}, the sufficient thresholds ${N_{{\text{suf}}}}$ in \eqref{eq65} can be asymptotically characterized as
\begin{equation}
{N_{{\text{suf}}}}  = \left\{ \begin{gathered}
  O\left( {{K^2}} \right),{\text{ if }}\eta  < {\text{min}}\left( {\sqrt \beta  ,\alpha K} \right) \hfill \\
  O\left( {\frac{K}{\alpha }\eta } \right),{\text{ if }}\alpha K < \eta  < \frac{\beta }{{\alpha K}}{\text{ }} \hfill \\
  O\left( {\frac{{{K^2}}}{\beta }{\eta ^2}} \right),{\text{ otherwise}} \hfill \\ 
\end{gathered}  \right.
\label{eq_ars},
\end{equation}
where $\eta  = \sqrt {\eta _1^2 + \eta _2^2}$, and ${\eta _1} = \frac{{{\sigma _2}}}{{{\sigma _1}}}$ and ${\eta _2} = \frac{{{\sigma _3}}}{{{\sigma _1}}}$ can be taken as the strength ratios of the strongest same-side and cross-side direct links to the weakest cascaded reflective link, respectively. 
If the number of active REs exceeds this threshold, the feasibility of problem \eqref{eq_feasibility} can be guaranteed. Hence, ${N_{{\text{suf}}}}$ serves as a minimal upper bound on the required number of REs for achieving the full DoF.

\section{Reflection–Circuit Power Tradeoff}
\label{S_V}
As analyzed above, for a given reflection power and per-RE amplification gain, interference-free transmission can be achieved by increasing the number of REs. However, in practice, the hardware power consumption of active RE circuits increases with their number, while the total power budget remains limited. Consequently, the power available for amplified reflection decreases as more active REs are deployed, potentially making interference-free transmission infeasible. In this section, we derive the minimum total power and number of REs required to ensure interference-free transmission.

Since the total power consumption of an active RIS is given by ${P_{{\text{tot}}}} = {P_{{\text{ref}}}} + N{P_{{\text{cir}}}}$ \cite{9734027},
where ${P_{{\text{ref}}}}$ denotes the power of the amplified reflected signal, and ${P_{{\text{cir}}}} = {P_{{\text{SW}}}} + {P_{{\text{DC}}}}$ represents the fixed hardware power consumption per active RE. Here, ${P_{{\text{SW}}}}$ is the power consumed by the phase shifter switch and control circuit, and ${P_{{\text{DC}}}}$ is the DC bias power required by the amplifier of each active RE.  
Therefore, for a given total power budget ${P_{{\text{tot}}}}$, the maximum available reflection power ${P_r}$ is
\begin{equation}
{P_r} = {P_{{\text{tot}}}} - N{P_{{\text{cir}}}}.
\end{equation}

Consequently, the sufficient condition for the required number of active REs under a given total power budget ${P_{{\text{tot}}}}$ can be obtained by replacing ${P_r}$ in \eqref{eq65} with ${P_{{\text{tot}}}} - N{P_{{\text{cir}}}}$, i.e.,
\begin{equation}
N \geqslant \left\{ \begin{gathered}
  {\bar N_{{\text{suf1}}}}{\text{, if }}\frac{{{P_{{\text{tot}}}} - N{P_{{\text{cir}}}}}}{{\mu {\alpha ^2}}} \geqslant {\bar N_{{\text{suf1}}}} \hfill \\
  {\bar N_{{\text{suf2}}}},{\text{ otherwise}} \hfill \\ 
\end{gathered}  \right.
\label{eq69},
\end{equation}
where ${\bar N_{{\text{suf1}}}} = \frac{1}{2}\left( {a + \sqrt {{a^2} + \frac{8}{{{\alpha ^2}}}b} } \right)$, ${\bar N_{{\text{suf2}}}} = \frac{{2b\mu }}{{{P_{{\text{tot}}}} - N{P_{{\text{cir}}}}}} + a$, $\mu  = KP\left( {\tau _1^2 + \tau _2^2} \right) + \sigma _r^2$, $a = 2K\left( {K - 1} \right) + 1$ and $b = K\left( {K - 1} \right)\left( {\eta _1^2 + \eta _2^2} \right)$. ${\eta _1}$ and ${\eta _2}$ can be taken as the strength ratios of the strongest same-side and cross-side direct links to the weakest cascaded reflective link, respectively.
However, since the piecewise condition involves the unknown parameter $N$, \eqref{eq69} cannot be applied directly.

To address this issue, and based on the derivations in Appendix \ref{appG}, the sufficient condition \eqref{eq69} can be rewritten as
\begin{equation}
{{\bar N}_2} \geqslant N \geqslant\left\{ \begin{gathered}
   {\bar N_{{\text{suf1}}}},{\text{ if }}{P_{{\text{tot}}}} \geqslant \left( {{P_{{\text{cir}}}} + \mu {\alpha ^2}} \right){\bar N_{{\text{suf1}}}} \hfill \\
 {{\bar N}_1},{\text{ if }}\left( {{P_{{\text{cir}}}} + \mu {\alpha ^2}} \right){\bar N_{{\text{suf1}}}} > {P_{{\text{tot}}}} \geqslant {P_\vartriangle } \hfill \\ 
\end{gathered}  \right.
\label{eq70},
\end{equation}
where ${{\bar N}_{\text{1}}} = \frac{{{P_{{\text{tot}}}} + a{P_{{\text{cir}}}} - \sqrt \Delta  }}{{2{P_{{\text{cir}}}}}}$, ${{\bar N}_2} = \frac{{{P_{{\text{tot}}}} + a{P_{{\text{cir}}}} + \sqrt \Delta  }}{{2{P_{{\text{cir}}}}}}$, ${P_\vartriangle } = a{P_{{\text{cir}}}} + \sqrt {8{P_{{\text{cir}}}}b\mu }$, and $\Delta  = {\left( {{P_{{\text{tot}}}} + a{P_{{\text{cir}}}}} \right)^2} - 4{P_{{\text{cir}}}}\left( {2b\mu  + a{P_{{\text{tot}}}}} \right)$.

Therefore, the minimal total power should satisfy ${P_{{\text{tot}}}} \geqslant {P_\Delta } = O\left( {{K^{\frac{3}{2}}}\max \left( {{K^{\frac{1}{2}}},{\eta ^{\frac{1}{2}}}} \right)} \right)$, and $\frac{{\partial {{\bar N}_{\text{1}}}}}{{\partial {P_{{\text{tot}}}}}} < 0$.
${{\bar N}_2}$ and ${\bar N_{{\text{suf1}}}}$ can be asymptotically
characterized as 
${{\bar N}_2} =O\left( \frac{P_\text{tot}}{P_\text{cir}} \right)$ and
\begin{equation}
{\bar N_{{\text{suf1}}}} =O\left( {K\max \left( {K,\frac{\eta }{\alpha }} \right)} \right).
\end{equation}

The sufficient condition in \eqref{eq70} offers valuable guidance for system design to ensure reliable performance. Specifically, if the total power budget ${P_{\text{tot}}}$ is insufficient (i.e., ${P_{\text{tot}}} < {P_\vartriangle }$), interference-free transmission cannot be guaranteed, regardless of how the active RIS is configured. Moreover, even with adequate power, interference-free transmission still cannot be guaranteed if the number of REs falls outside a certain interval. This is because too many active REs increase circuit power, thereby reducing the power available for reflection. While too few REs cannot suppress interference effectively.

\section{Simulation Result}
\label{Simulation}
In this section, we present numerical results to validate our theoretical findings on the required number of active REs to achieve interference-free transmission or the full DoF $2K$ in the two-way $K$-user interference channel.

\subsection{Simulation Setup}
Without loss of generality, all channels are assumed reciprocal. Unless stated otherwise, the system bandwidth is $1\rm{MHz}$ and the noise spectral density is $-174\rm{dBm/Hz}$. The noise powers at the RIS and users are equal \cite{long2021active, ZhouGui}, i.e., ${\sigma^2} = \sigma_r^2$, and all users transmit with $P = 30\rm{dBm}$. The left and right side users are evenly distributed along the $x$-axis within $\left[ {5\rm{m},50\rm{m}} \right]$ at $y=-15\rm{m}$ and $y=15\rm{m}$, respectively, both at $z=-20\rm{m}$. The RIS is located at $(0,0,0)$.

To assess the practical applicability of our theoretical results, the RIS-related channels are modeled as Rician fading
\begin{equation}
{\bf{\tilde H}} = \sqrt L\left( {\sqrt {\frac{\varepsilon }{{\varepsilon  + 1}}} {{{\bf{\tilde H}}}^{{\rm{LoS}}}} + \sqrt {\frac{1}{{\varepsilon  + 1}}} {{{\bf{\tilde H}}}^{{\rm{NLoS}}}}} \right),
\end{equation}
where $\varepsilon=10$ is the Rician factor, and ${{\bf{\tilde H}}^{{\rm{LoS}}}}$ and ${{\bf{\tilde H}}^{{\rm{NLoS}}}}$ represent the line-of-sight (LoS) and non-LoS (NLoS) components, respectively, as in \cite{jiang2021learning}. The direct user-to-user links remain Rayleigh fading due to extensive scattering \cite{pan2020multicell}.
The large-scale path loss $L$ in dB is modeled as $L\left( \text{d} \right) =  - 30 - 10\omega {\log _{10}}\left( \text{d} \right)$, where $\text{d}$ is the link distance in meters \cite{jiang2021learning,ZhouGui}. The path-loss exponent $\omega$ is set to 2 for reflective links and 4 for direct links \cite{oa2018determination}. And 200 independent trials are conducted.

\subsection{Required REs under Different Hardware Constraints}
\label{A_RE}
This subsection evaluates the number of active REs required for full DoF, as analyzed in Section~\ref{S_IV}. In Figs.~\ref{alpha}–\ref{user_P}, the curves labeled “$99\%$” and “$1\%$ probability of full-DoF” indicate the values of $N$ achieving full DoF with $99\%$ and $1\%$ probability, respectively. The “necessary” line is derived from \eqref{eq54}, where ${\eta _1}$ and ${\eta _2}$ are computed by the average path losses of the cascaded reflective, same-side direct, and opposite-side direct links. The reflective path losses $\tau_1^2$ and $\tau_2^2$ are set to the square roots of the cascaded loss. The “sufficient” line comes from \eqref{eq65}, where ${\eta _1}$ and ${\eta _2}$ are computed based on the average path losses of the same-side and opposite-side direct links, while the cascaded reflective path loss is set to its minimum value determined by users at $\left( {50{\text{m}}, \pm 15{\text{m}}, -20{\text{m}}} \right)$.

Fig.\ref{alpha} illustrates the relationship between the maximum per-RE amplification gain ${\alpha^2}$ and the number of REs required for full DoF. It shows that the required number of REs is nearly independent of the channel model. Moreover, as ${\alpha^2}$ increases, the required number of REs decreases, since stronger amplification enhances the ability of RIS to reflect and cancel interference. However, when ${\alpha^2}$ exceeds a certain threshold, further increases do not reduce the number of REs, which remains higher than $N = 2K(K - 1)$ (i.e., the number of interference-nulling equations)  \cite{10706811}. This is because the reflection power ${P_r}=30\mathrm{dBm}$ is relatively insufficient at high amplification levels and thus becomes the main factor determining the required number of REs. In contrast, when both ${\alpha^2}$ and ${P_r}$ are sufficiently large, the required number of REs closely matches $N = 2K(K - 1)$ that depends only on the number of user pairs $K$, as demonstrated in Fig.~\ref{Pr}.
\begin{figure}
	\centering\includegraphics[width=6.5cm]{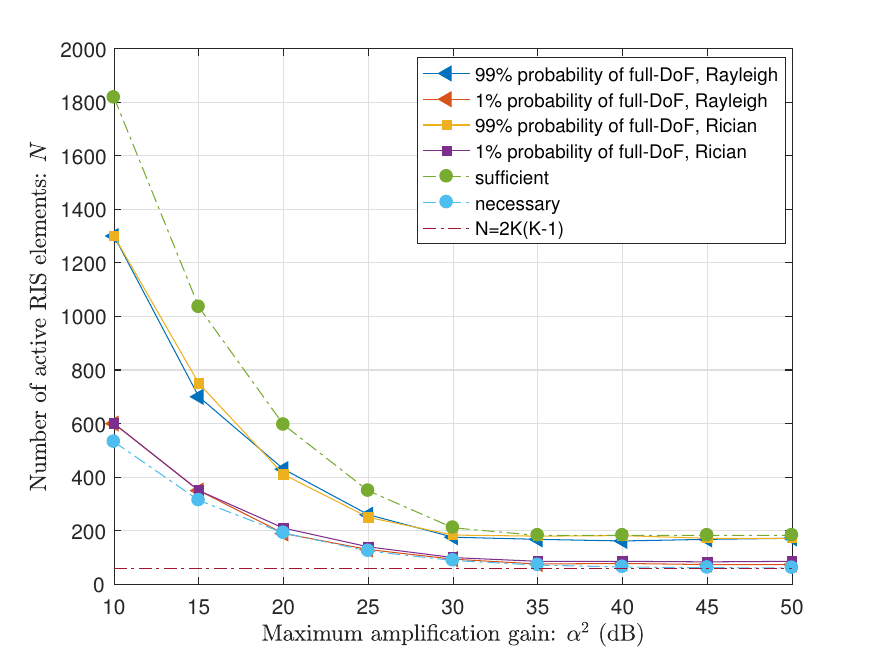}
    \caption{Required REs for full DoF vs. maximum amplification gain. ($K=6$, ${P_r} = 30{\text{dBm}}$,  ${P} = 30{\text{dBm}}$).}
	\label{alpha}
\end{figure}

Fig.~\ref{Pr} illustrates the relationship between the reflection power $P_r$ and the number of REs required for full DoF. It is observed that when the per-RE amplification gain ${\alpha^2}$ is sufficiently large but $P_r$ is small, the required number of REs decreases with increasing $P_r$. However, when $P_r$ exceeds a certain threshold such that both $P_r$ and ${\alpha^2}$ are relatively sufficient, the required number of REs is approximately $N = 2K(K - 1)$, which depends only on the number of user pairs $K$. It is consistent with our theoretical analysis.

\begin{figure}
	\centering\includegraphics[width=6.5cm]{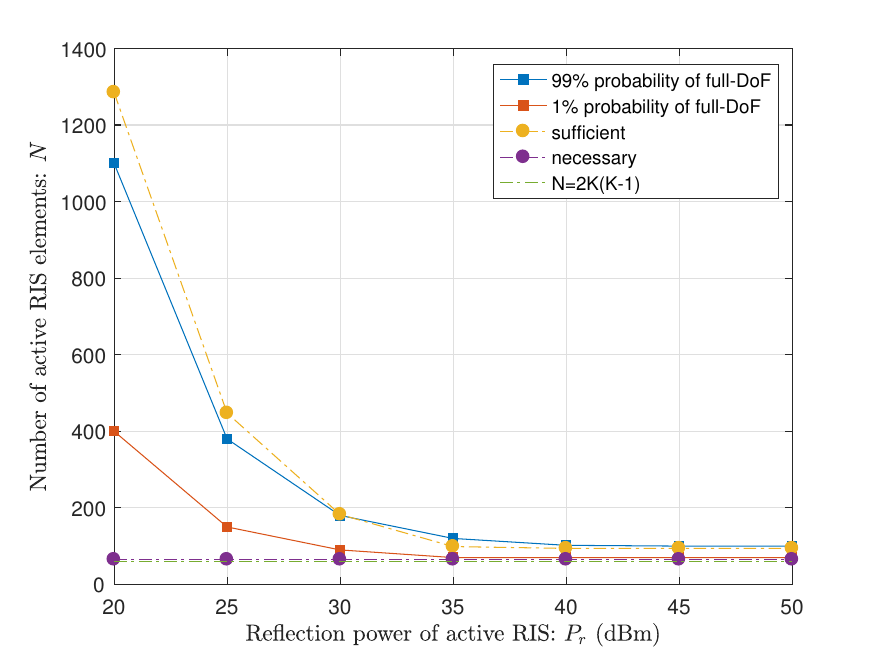}
    \caption{Required REs for full DoF vs. reflection power ($K=6$, $P = 30{\text{dBm}}$, ${\alpha ^2} = 40{\text{dB}}$).}
	\label{Pr}
\end{figure}

Fig.~\ref{user_P} illustrates the relationship between the user transmission power $P$ and the number of REs required for full DoF.  
It can be observed that when $P$ is small, the required number of REs remains constant as $P$ increases. However, this number is larger than $N = 2K(K - 1)$. 
This is because, in this case, although $P_r$ is sufficient, ${\alpha^2}$ is set to 20dB, which is insufficient and becomes the dominant factor determining the required number of REs. In contrast, when $P$ exceeds a certain threshold, the number of active REs increases with $P$, as both ${\alpha^2}$ and $P_r$ become insufficient relative to the higher user transmit power. Consequently, more REs are needed to mitigate the intensified interference. These trends are consistent with our theoretical analysis.

\begin{figure}
	\centering\includegraphics[width=6.5cm]{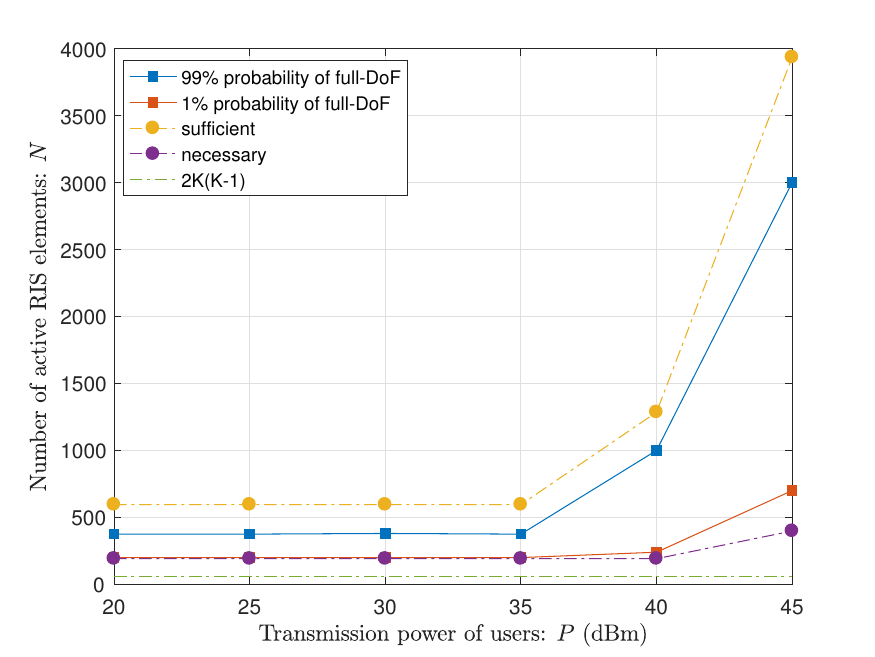}
    \caption{Required REs for full DoF vs. user transmission power ($K=6$, ${P_r} = 30{\text{dBm}}$, ${\alpha^2} = 20{\text{dB}}$).}
	\label{user_P}
\end{figure}

\subsection{Feasible RE Range under a Limited Power Budget}
This subsection evaluates the number of REs required for full DoF under a fixed total power. In Fig.~\ref{tradeoff}, the “Theoretical range” is given by \eqref{eq70}, with the channel strength ratios (${\eta_1}$, ${\eta_2}$) and reflective path losses ($\tau_1^2$, $\tau_2^2$) set as in the “sufficient” line of Subsection~\ref{A_RE}. For comparison, the “Conventional range” is given by $N_\text{low}=2K(K-1)$ and $N_\text{up}=P_\text{tot}/P_\text{cir}$, where $N_\text{low}$ corresponds to the number of interference-nulling equations \cite{10706811}, and $N_\text{up}$ represents the extreme case in which all power is allocated to the amplification circuits.

It shows that neither too few nor too many active REs can achieve full DoF. This is because, under a fixed total power budget, too few REs cannot effectively suppress interference, while too many REs consume excessive circuit power, leaving inadequate power for reflection. Moreover, the RE's range that ensures full DoF closely aligns with the theoretical prediction. 

\begin{figure}
	\centering\includegraphics[width=6.5cm]{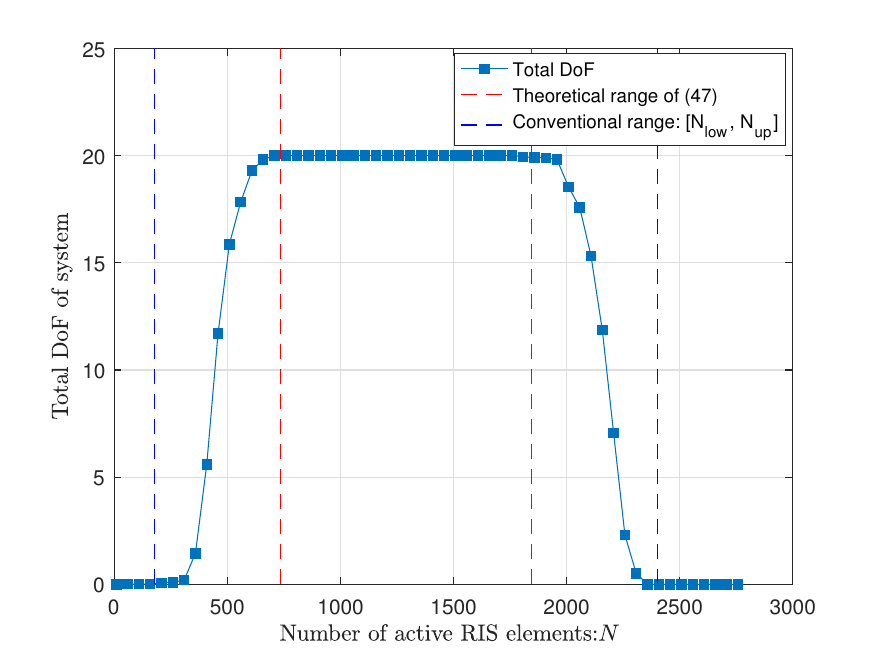}
    \caption{Number of REs vs. total DoF ($K=10$, ${P_{{\text{tot}}}}=30{\text{dBm}}$, ${P_{{\text{SW}}}}=-10{\text{dBm}}$, ${P_{{\text{DC}}}}=-5{\text{dBm}}$, ${\alpha ^2} = 40{\text{dB}}$, $P=20{\text{dBm}}$).}
	\label{tradeoff}
\end{figure}

\section{Conclusion}
\label{Conclusion}
This paper studies interference-free transmission with active RIS in the two-way $K$-user interference channel. Due to hardware limits on per-RE gain and total reflection power, achieving interference-free transmission is equivalent to checking the feasibility of a random linear system with joint norm and box constraints. By decomposing this problem into set-intersection subproblems and approximating the intersection of the joint constraints with a smaller, tractable set, necessary and sufficient conditions for interference-free transmission are derived based on geometric and probabilistic analysis.

The results show that when both the per-RE gain and reflection power are sufficient, the required number of active REs depends only on the user pairs. If reflection power is sufficient but gain is limited, the number of REs depends on both the gain and the channel strength ratios; otherwise, it is mainly inversely proportional to the reflection power. With a limited total power budget, the number of REs must lie within a certain range to ensure interference-free transmission. The proposed framework provides a general method for analyzing other feasibility problems in constrained random linear systems.

\appendix
\subsection{Proof of Theorem \ref{cor1}}	
\label{appc1}
Since each $\mathbf{x} \in \bar{S}_1$ can be decomposed as
\begin{equation}
{\mathbf{x}} = {{\mathbf{\tilde x}}} + {{\mathbf{x}}_\parallel },    
\end{equation}
where ${{\mathbf{\tilde x}}} = {\text{Pro}}{{\text{j}}_{{{\left( {{\text{null}}\left( {\mathbf{G}} \right)} \right)}^ \bot }}}\left( {{{\mathbf{x}}_0}} \right)$ represents the projection of ${{\mathbf{x}}_0}$ onto the orthogonal complement of ${{\text{null}}\left( \mathbf{G} \right)}$.
${{\mathbf{x}}_\parallel }$ is a vector lying in the null space of ${\mathbf{G}}$ (i.e., ${{\mathbf{x}}_\parallel } \in {\text{null}}\left( {\mathbf{G}} \right)$), and ${{\mathbf{\tilde x}}} \bot {{\mathbf{x}}_\parallel }$. 
Thus, if ${{\bar S}_1} \cap {{\bar S}_3} \ne \varnothing$, then there must exist a ${{\mathbf{x}}_\parallel }$ such that
\begin{equation}
{\left| {{{\mathbf{\tilde x}}} + {{\mathbf{x}}_\parallel }} \right|^2} \leqslant {\gamma ^2}N \Leftrightarrow {\left| {{\mathbf{\tilde x}}} \right|^2} + {\left| {{{\mathbf{x}}_\parallel }} \right|^2} \leqslant {\gamma ^2}N.  
\label{eqa48}
\end{equation}
Thus, a feasible ${{\mathbf{x}}_\parallel }$ may exist (i.e., ${{\bar S}_1} \cap {{\bar S}_3} \ne \varnothing$) only when
\begin{equation}
{\left| {{\mathbf{\tilde x}}} \right|^2} \leqslant {\gamma ^2}N \Leftrightarrow \left| {{\text{Pro}}{{\text{j}}_{{{\left( {{\text{null}}\left( {\mathbf{G}} \right)} \right)}^ \bot }}}\left( {{{\mathbf{x}}_0}} \right)} \right| \leqslant \gamma\sqrt N .
\end{equation}

\subsection{Proof of Theorem \ref{the3}}	
\label{appC}
Since each element of ${\mathbf{G}} \in {\mathbb{C}^{L \times N}}$ is i.i.d.$ \sim \mathcal{C}\mathcal{N}\left( {0,{\sigma ^2}} \right)$, and ${\mathbf{x}} \in S = \left\{ {\left. {\bf{x}} \right|{{\bf{x}}^H}{\bf{x}} \leqslant \gamma  } \right\}$, the random vector ${{\mathbf{Gx}}}$ has the same distribution as $\sigma \sqrt \gamma {\mathbf{\tilde G\tilde x}}$, i.e., ${\mathbf{Gx}} \mathop  = \limits^{\text{d}} \sigma \sqrt \gamma {\mathbf{\tilde G\tilde x}}$,
where each element of ${\mathbf{\tilde G}} \in {\mathbb{C}^{L \times N}}$ is i.i.d.$ \sim \mathcal{C}\mathcal{N}\left( {0,1} \right)$ , and 
\begin{equation}
\begin{array}{l}
{\mathbf{\tilde x}} \in \tilde S = \left\{ {\left. {{\mathbf{\tilde x}}} \right|{{{\mathbf{\tilde x}}}^H}{\mathbf{\tilde x}}  \leqslant 1} \right\}.
\end{array}
\end{equation}
Thus we have 
\begin{equation}
E\left( {\left| {{\mathbf{Gx}}} \right|} \right) = \sigma \sqrt \gamma  \cdot E\left( {\left| {{\mathbf{\tilde G\tilde x}}} \right|} \right)
\label{eqa63}.
\end{equation}
According to Gordon's Theorem \cite{gordon1988milman}, we can get
\begin{equation}
\left\{ \begin{gathered}
  E\left( {\mathop {\min }\limits_{{\mathbf{\tilde x}} \in \tilde S} \left| {{\mathbf{\tilde G\tilde x}}} \right|} \right) \geqslant {a_L} - W\left( {\tilde S} \right) \hfill \\
  E\left( {\mathop {\max }\limits_{{\mathbf{\tilde x}} \in \tilde S} \left| {{\mathbf{\tilde G\tilde x}}} \right|} \right) \leqslant {a_L} + W\left( {\tilde S} \right) \hfill \\ 
\end{gathered}  \right.
\label{eqa64},
\end{equation}
where ${a_L} = \sqrt L $, and $W\left( {\tilde S} \right)$ denotes the Gaussian width \cite{vershynin2018high} of the set ${\tilde S}$ which can be calculated as
\begin{equation}
\begin{aligned}
W\left( {\tilde S} \right) &= E\left( {\mathop {\sup }\limits_{{\mathbf{\tilde x}} \in \tilde S} \left| {\left\langle {{\mathbf{g}},{\mathbf{\tilde x}}} \right\rangle } \right|} \right) \hfill
= E\left( {\left\{ {\begin{aligned}
		&\max \left| {{{\mathbf{g}}^H}{\mathbf{\tilde x}}} \right|\\
		&s.t.{\text{ }}{{{\mathbf{\tilde x}}}^H}{\mathbf{\tilde x}} \leqslant 1
		\end{aligned}} \right.} \right) \\
& \mathop  = \limits^{\left( a \right)} E\left( {\mathop {\max }\limits_{{\mathbf{\tilde x}} \in \tilde S} \left( {\left| {\mathbf{g}} \right| \cdot \left| {{\mathbf{\tilde x}}} \right|} \right)} \right)       
=  E\left( {\left| {\mathbf{g}} \right|} \right)  \hfill 
\mathop  = \limits^{N \to \infty } \sqrt N 
\end{aligned}
\label{eqa65}
\end{equation}
where  ${\mathbf{g}} \sim \mathcal{C}\mathcal{N}\left( {0,{{\mathbf{I}}_N}} \right)$ is a gaussian vector, and (a) follows from the Cauchy–Schwarz inequality.
Combining \eqref{eqa63} and \eqref{eqa64} we can get \eqref{eq50},
which completes the proof.

\subsection{Proof of the Asymptotic Form of ${N_{{\text{nec}}}}$ in \eqref{eq_arn}}
\label{appN_arn}
According to the definitions of $N_1$ and $N_2$ given in  \eqref{eq46} and \eqref{eq53}, respectively, we have
\begin{equation}
{N_1} = \left\{ \begin{gathered}
  O\left( {{K^2}} \right),{\text{ if }}\eta  < \alpha K \hfill \\
  O\left( {\frac{K}{\alpha }\eta } \right),{\text{ if }}\eta  > \alpha K \hfill \\ 
\end{gathered}  \right.
,
\end{equation}
\begin{equation}
{N_2} = \left\{ \begin{gathered}
  O\left( {{K^2}} \right),{\text{ if }}\eta  < \sqrt \beta   \hfill \\
  O\left( {\frac{{{K^2}}}{\beta }{\eta ^2}} \right),{\text{ if }}\eta  > \sqrt \beta   \hfill \\ 
\end{gathered}  \right.
.
\end{equation}
If $\sqrt \beta   < \alpha K$, then ${N_{{\text{nec}}}} \triangleq \max \left( {{N_1},{N_2}} \right)$ can be rewritten as
\begin{equation}
{N_{{\text{nec}}}} = \left\{ \begin{gathered}
  O\left( {{K^2}} \right),{\text{ if }}\eta  < \sqrt \beta   \hfill \\
  O\left( {\frac{{{K^2}}}{\beta }{\eta ^2}} \right),{\text{ if }}\eta  > \sqrt \beta   \hfill \\ 
\end{gathered}  \right.
\label{arn_1}.
\end{equation}
Moreover, if $\sqrt \beta   > \alpha K$, then we have
\begin{equation}
{N_{{\text{nec}}}} = \left\{ \begin{gathered}
  O\left( {{K^2}} \right),{\text{ if }}\eta  < \alpha K \hfill \\
  O\left( {\frac{K}{\alpha }\eta } \right),{\text{ if }}\alpha K < \eta  < \frac{\beta }{{\alpha K}} \hfill \\
  O\left( {\frac{{{K^2}}}{\beta }{\eta ^2}} \right),{\text{ if }}\frac{\beta }{{\alpha K}} < \eta  \hfill \\ 
\end{gathered}  \right.
\label{arn_2}.
\end{equation}
Combining \eqref{arn_1} and \eqref{arn_2}, it follows that when
\begin{equation}
\begin{aligned}
   \eta  < \sqrt \beta   < \alpha K {\text{ or }}\eta  < \alpha K < \sqrt \beta
   \Leftrightarrow \eta  < \min \left( {\alpha K,\sqrt \beta  } \right)
\end{aligned}
,
\end{equation}
we have ${N_{{\text{nec}}}} = O\left( {{K^2}} \right)$. Additionally, when
\begin{equation}
\sqrt \beta   > \alpha K{\text{ and }}\alpha K < \eta  < \frac{\beta }{{\alpha K}},
\end{equation}
which is equivalent to $\alpha K < \eta  < \frac{\beta }{{\alpha K}}$ as $\alpha K < \frac{\beta }{{\alpha K}} \Leftrightarrow \alpha K < \sqrt \beta$,
we have ${N_{{\text{nec}}}} = O\left( {\frac{K}{\alpha }\eta } \right)$. For the remaining cases, ${N_{{\text{nec}}}} = O\left( {\frac{{{K^2}}}{\beta }{\eta ^2}} \right)$. This completes the proof.

\subsection{Proof of Theorem \ref{the4}}	
\label{appE}
Since ${S_3} = \left\{ {{\mathbf{v}} \in {\mathbb{C}^{ N}}\left| {\left| {{v_i}} \right| \leqslant \alpha ,\forall i \in \left[ {1, N} \right]} \right.} \right\}$ and ${S_4} = \left\{ {{\mathbf{v}} \in {\mathbb{C}^{ N}}\left| {{{\mathbf{v}}^H}{\mathbf{v}} \leqslant \beta } \right.} \right\}$,
for any ${\mathbf{v}} \in {S_3}$, we have 
\begin{equation}
{{\mathbf{v}}^H}{\mathbf{v}} = \sum\limits_{i = 1}^{N} {{{\left| {{v_i}} \right|}^2}}  \leqslant  N{\alpha ^2}.
\end{equation}
If $\beta  \geqslant N{\alpha ^2}$, it follows that ${S_3} \subseteq {S_4}$ and hence ${S_3} \cap {S_4} = {S_3}$.

Conversely, for any ${\mathbf{v}} \in {S_4}$, since ${{{\mathbf{v}}^H}{\mathbf{v}} \leqslant \beta }$, it follows that $\left| {{v_i}} \right| \leqslant \sqrt \beta  {\text{, }}\forall i$. Thus, if $\alpha  \geqslant \sqrt \beta$, then ${S_4} \subseteq {S_3}$, and hence ${S_3} \cap {S_4} = {S_4}$.

For the remaining case $\alpha  < \sqrt \beta   < \alpha \sqrt { N}$, since ${\sqrt {\frac{\beta }{{  N}}}  < \alpha }$, it follows that 
\begin{equation}
{S_{{\text{sub}}}} = \left\{ {\left. {{\mathbf{v}} \in {\mathbb{C}^{ N}}} \right|\left| {{v_i}} \right| \leqslant \sqrt {\frac{\beta }{{  N}}} ,\forall i \in \left[ {1, N} \right]} \right\} \subseteq {S_3}
\label{eqa74}.
\end{equation}
Moreover, for any ${\mathbf{v}} \in {S_{{\text{sub}}}}$, we have 
\begin{equation}
{{\mathbf{v}}^H}{\mathbf{v}} = \sum\limits_{i = 1}^{ N} {{{\left| {{v_i}} \right|}^2}}  \leqslant   N\frac{\beta }{{ N}} = \beta,
\end{equation}
which implies ${S_{{\text{sub}}}} \subseteq {S_4}$. Combining the two results yields ${S_{{\text{sub}}}} \subseteq {S_3} \cap {S_4}$. This completes the proof.

\subsection{Proof of Theorem \ref{cor2}}	
\label{appc2}
Since any two elements in ${{\bar S}_3}$, such as ${\mathbf{x}}_n$ and ${\mathbf{x}}_m$, satisfy
\begin{equation}
\begin{gathered}
  E\left( {{\mathbf{x}}_n^H{{\mathbf{x}}_m}} \right)\mathop  = \limits^{\left( a \right)} \sum\limits_{i = 1}^N {E\left( {{\rho_{n,i}}{e^{ - j{\varphi _{n,i}}}}{\rho_{m,i}}{e^{j{\varphi _{m,i}}}}} \right)}  \hfill \\
   = \sum\limits_{i = 1}^N {E\left( {{\rho_{n,i}}{\rho_{m,i}}} \right)E\left( {{e^{j{\varphi _{n,i}}}}{e^{j{\varphi _{m,i}}}}} \right)} \mathop  = \limits^{\left( b \right)} 0 \hfill \\ 
\end{gathered}
\label{eq_app4_1}
\end{equation}
where equation (a) follows from the fact that each element of ${{\mathbf{x}}_n}$ can be written as ${x_{n,i}} = {\rho_{n,i}}{e^{j{\varphi _{n,i}}}}$ with ${\rho_{n,i}} \leqslant \gamma$. Equation (b) follows from the fact that $\varphi_{n,i}$ and $\varphi_{m,i}$ are independent and uniformly distributed over $\left[ {0,2\pi } \right)$ \cite{zhu2024robust}, i.e.,
\begin{equation}
E\left( {{e^{j{\varphi _{n,i}}}}{e^{j{\varphi _{m,i}}}}} \right) = \int_0^{2\pi } {\frac{1}{{2\pi }}{e^{j{\varphi _n}}}d{\varphi _n}}  \cdot \int_0^{2\pi } {\frac{1}{{2\pi }}{e^{j{\varphi _m}}}d{\varphi _m}}  = 0.
\end{equation}
Thus, by the law of large numbers, \eqref{eq_app4_1} implies that as $N \to \infty $, any two distinct elements in ${{\bar S}_3}$ are orthogonal, i.e., the angle $\theta$ between them satisfies
\begin{equation}
\theta=\frac{\pi }{2}
\label{eq_theta_0}. 
\end{equation}
Moreover, ${{\bar S}_1}$ can be viewed as an affine subspace located at a distance $d = \left| {{\text{Pro}}{{\text{j}}_{{{\left( {{\text{null}}\left( {\mathbf{G}} \right)} \right)}^ \bot }}}\left( {{{\mathbf{x}}_0}} \right)} \right|$ from the origin, and ${{\bar S}_3}$ consists of lines contained in a complex ball of radius $\gamma \sqrt{N}$. Their relationship is illustrated in Fig.~\ref{intersect_s1_s3}, where the green lines represent ${{\bar S}_3}$ and the blue plane represents ${{\bar S}_1}$. 
\begin{figure}[!ht] 
	\centering\includegraphics[width=3.5cm]{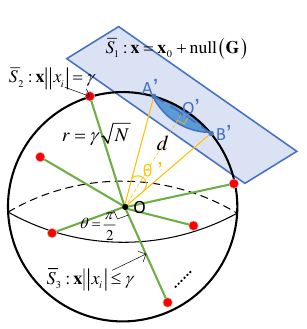}
    \caption{Relationship between ${{\bar S}_1}$ and ${{\bar S}_3}$ in high-dimensional complex space.}
	\label{intersect_s1_s3}
\end{figure}

When the complex sphere (${\mathbb{S}^r}$) intersects ${{\bar S}_1}$, we denote the resulting intersection by ${A'}{O'}{B'}$, where ${A'}$ and ${B'}$ are the endpoints of a diameter of the intersection. Thus, the central angle ${\theta '}$ is given by
\begin{equation}
\theta ' = {\text{arccos}}\frac{{\left\langle {{{\mathbf{x}}_{A'}},{{\mathbf{x}}_{B'}}} \right\rangle }}{{\left| {{{\mathbf{x}}_{A'}}} \right|\left| {{{\mathbf{x}}_{B'}}} \right|}},
\label{eq_s1_s2_theta}
\end{equation}
where ${{\mathbf{x}}_{A'}} = d{\mathbf{\hat n}} + {\mathbf{w}}$, ${{\mathbf{x}}_{B'}} = d{\mathbf{\hat n}} - {\mathbf{w}}$ with ${{\mathbf{\hat n}}}$ denoting the normal vector of ${{\bar S}_1}$, and ${\mathbf{w}} \triangleq \overrightarrow {O'A'}$ satisfying $\left| {\mathbf{w}} \right| = \sqrt {{r^2} - {d^2}}$. Thus, ${\mathbf{\hat n}} \bot {\mathbf{w}}$ and \eqref{eq_s1_s2_theta} can be rewritten as
\begin{equation}
\theta ' = {\text{arccos}}\frac{{{{\left( {d{\mathbf{\hat n}} + {\mathbf{w}}} \right)}^H}\left( {d{\mathbf{\hat n}} - {\mathbf{w}}} \right)}}{{{r^2}}} = {\text{arccos}}\frac{{2{d^2} - {r^2}}}{{{r^2}}}.
\end{equation}

Consequently, suppose, for the sake of contradiction, that when $\pi  \geqslant \theta ' \geqslant \theta  = \frac{\pi }{2}$, the intersection between ${{\bar S}_1}$ and ${{\bar S}_3}$ is empty, i.e., ${{\bar S}_1} \cap {{\bar S}_3} = \varnothing$. This would imply that all vectors in ${{\bar S}_3}$ lie outside the cone, meaning that there must exist two vectors in ${{\bar S}_3}$ whose mutual angle is strictly larger than $\frac{\pi }{2}$, which contradicts \eqref{eq_theta_0}. Therefore, when $\pi  \geqslant \theta ' \geqslant \frac{\pi }{2}$, i.e.,
\begin{equation}
\frac{{2{d^2} - {r^2}}}{{{r^2}}} \leqslant 0 \Leftrightarrow d \leqslant \frac{r}{{\sqrt 2 }} \Leftrightarrow \left| {{\text{Pro}}{{\text{j}}_{{{\left( {{\text{null}}\left( {\mathbf{G}} \right)} \right)}^ \bot }}}\left( {{{\mathbf{x}}_0}} \right)} \right| \leqslant \frac{{\gamma \sqrt N }}{{\sqrt 2 }},
\end{equation}
we must have ${{\bar S}_1} \cap {{\bar S}_3} \ne \varnothing$, which completes the proof.

\subsection{Proof of expression \eqref{eq65}}	
\label{appF}
By combining the results from Cases 1, 2, and 3, it directly follows that
\begin{equation}
N \geqslant \left\{ \begin{gathered}
  {N_{{\text{suf1}}}}{\text{, if }}\frac{\beta }{{{\alpha ^2}}} \geqslant N \hfill \\
  {N_{{\text{suf2}}}},{\text{ if }}\frac{\beta }{{{\alpha ^2}}} < N \hfill \\ 
\end{gathered}  \right.
\label{F88},
\end{equation}
where ${N_{{\text{suf1}}}} = \frac{1}{2}\left( {a + \sqrt {{a^2} + \frac{8}{{{\alpha ^2}}}b} } \right)$ and ${N_{{\text{suf2}}}} = \frac{{2b}}{\beta } + a$.
However, since the number of REs ($N$) is an unknown parameter needing to be determined, its presence in the piecewise condition makes \eqref{F88} inapplicable. 

To this end, we first consider the case where the equivalent reflection power $\beta$ is sufficiently large such that $\beta  \geqslant {N_{{\text{suf1}}}}{\alpha ^2}$. In this case, if $\beta  \geqslant N{\alpha^2}$, according to \eqref{F88}, interference-free transmission requires $N \geqslant {N_{{\text{suf1}}}}$. Thus we have 
\begin{equation}
{N_{{\text{suf1}}}} \leqslant N \leqslant \frac{\beta }{{{\alpha ^2}}}
\label{eq_D1_case1}.
\end{equation}

Furthermore, if $\beta  < N{\alpha ^2}$, then according to \eqref{F88}, it must also hold that $N \geqslant {N_{{\text{suf2}}}}$. Thus, to ensure interference-free transmission, $N$ should satisfy 
\begin{equation}
N \geqslant \max \left( {{N_{{\text{suf2}}}},\frac{\beta }{{{\alpha ^2}}}} \right).
\label{F90}
\end{equation}
For \eqref{F90}, if we assume ${N_{{\text{suf2}}}} > \frac{\beta }{{{\alpha ^2}}}$, there must be
\begin{equation}
{N_{{\text{suf2}}}} > \frac{\beta }{{{\alpha ^2}}} \Leftrightarrow \beta  < \frac{{{\alpha ^2}}}{2}\left( {a + \sqrt {{a^2} + \frac{8}{{{\alpha ^2}}}b} } \right) = {\alpha ^2}{N_{{\text{suf1}}}}
\label{F93},
\end{equation}
which contradicts the precondition of $\beta  \geqslant {N_{{\text{suf1}}}}{\alpha ^2}$. Thus, $\frac{\beta }{{{\alpha ^2}}}$ must larger than ${N_{{\text{suf2}}}}$, and \eqref{F90} can be simplified as
\begin{equation}
N \geqslant \max \left( {{N_{{\text{suf2}}}},\frac{\beta }{{{\alpha ^2}}}} \right)= \frac{\beta }{{{\alpha ^2}}}
\label{eq_D1_case2}.
\end{equation}
Therefore, by combining \eqref{eq_D1_case1}, \eqref{eq_D1_case2}, and the precondition of $\beta  \geqslant {N_{{\text{suf1}}}}{\alpha ^2}$, the sufficient condition for interference-free transmission in this case can be expressed as
\begin{equation}
N \geqslant {N_{{\text{suf1}}}}{\text{, if }}\beta  \geqslant {N_{{\text{suf1}}}}{\alpha ^2}
\label{eq_D1}.
\end{equation}

Similarly, consider the case where the equivalent reflection power $\beta$ is insufficient, i.e., $\beta < {N_{\text{suf1}}}{\alpha^2}$. If $\beta \geqslant N{\alpha^2}$, then according to \eqref{F88}, we have
\begin{equation}
{N_{{\text{suf1}}}} \leqslant N \leqslant \frac{\beta }{{{\alpha ^2}}} \Rightarrow {N_{{\text{suf1}}}}{\alpha ^2} \leqslant \beta,
\end{equation}
which contradicts the precondition $\beta  < {N_{{\text{suf1}}}}{\alpha ^2}$. Furthermore, if $\beta  < N{\alpha ^2}$, then based on the previous analysis, we have 
\begin{equation}
N \geqslant \max \left( {{N_{{\text{suf2}}}},\frac{\beta }{{{\alpha ^2}}}} \right) = {N_{{\text{suf2}}}}
\label{eq_D2_case2}.
\end{equation}
Therefore, in this case, the sufficient condition for interference-free transmission is given by
\begin{equation}
N \geqslant {N_{{\text{suf2}}}}{\text{, if }}\beta  < {N_{{\text{suf1}}}}{\alpha ^2}
\label{eq_D2}.
\end{equation}

Combining \eqref{eq_D1} and \eqref{eq_D2} yields \eqref{eq65}, completing the proof.

\subsection{Proof of the Asymptotic Form of ${N_{{\text{suf}}}}$ in \eqref{eq_ars}}
\label{appN_ars}
According to the definitions of ${N_{{\text{suf1}}}}$, ${N_{{\text{suf2}}}}$, and ${N_{{\text{suf}}}}$ in \eqref{eq60}, \eqref{eq63}, and \eqref{eq65}, respectively, it follows that 
\begin{equation}
{\text{when }}\frac{\beta }{{{\alpha ^2}}} \geqslant {N_{{\text{suf1}}}},{N_{{\text{suf}}}} = {N_{{\text{suf1}}}} = \left\{ \begin{gathered}
  O\left( {{K^2}} \right),{\text{ if }}\eta  < \alpha K \hfill \\
  O\left( {\frac{K}{\alpha }\eta } \right),{\text{ if }}\eta  > \alpha K \hfill \\ 
\end{gathered}  \right.
\label{I1},
\end{equation}
\begin{equation}
{\text{when }}\frac{\beta }{{{\alpha ^2}}} < {N_{{\text{suf1}}}},{N_{{\text{suf}}}} = {N_{{\text{suf2}}}} = \left\{ \begin{gathered}
  O\left( {{K^2}} \right),{\text{ if }}\eta  < \sqrt \beta   \hfill \\
  O\left( {\frac{{{K^2}}}{\beta }{\eta ^2}} \right),{\text{ if }}\eta  > \sqrt \beta   \hfill \\ 
\end{gathered}  \right.
\label{I2}.
\end{equation}

Therefore, if $\eta  < \alpha K$, we have ${N_{{\text{suf1}}}} = O\left( {{K^2}} \right)$, and thus
\begin{equation}
\left\{ \begin{gathered}
  {\text{when }}\frac{\beta }{{{\alpha ^2}}} > {K^2} \Leftrightarrow \sqrt \beta   > K\alpha ,{\text{ }}{N_{{\text{suf}}}} = O\left( {{K^2}} \right) \hfill \\
  {\text{when }}\sqrt \beta   < K\alpha ,{\text{ }}{N_{{\text{suf}}}} = \left\{ \begin{gathered}
  O\left( {{K^2}} \right){\text{, if }}\eta  < \sqrt \beta   \hfill \\
  O\left( {\frac{{{K^2}}}{\beta }{\eta ^2}} \right),{\text{ if }}\eta  > \sqrt \beta   \hfill \\ 
\end{gathered}  \right. \hfill \\ 
\end{gathered}  \right.
\label{eq_ars_O1}.
\end{equation}

Similarly, if $\eta  > \alpha K$, we have ${N_{{\text{suf1}}}} = O\left( {\frac{K}{\alpha }\eta } \right)$, and thus
\begin{equation}
\left\{ \begin{gathered}
  {\text{when }}\frac{\beta }{{{\alpha ^2}}} > \frac{K}{\alpha }\eta  \Leftrightarrow \frac{\beta }{{K\alpha }} > \eta ,{\text{ }}{N_{{\text{suf}}}} = O\left( {\frac{K}{\alpha }\eta } \right) \hfill \\
  {\text{when }}\frac{\beta }{{K\alpha }} < \eta ,{\text{ }}{N_{{\text{suf}}}} = O\left( {\frac{{{K^2}}}{\beta }{\eta ^2}} \right),{\text{ if }}\eta  > \sqrt \beta   \hfill \\ 
\end{gathered}  \right.
\label{eq_ars_O2}.
\end{equation}
The second term results from the fact that, when $\frac{\beta}{K\alpha} < \eta$, ${N_{\text{suf}}} = O(K^2)$ if and only if $\eta < \sqrt{\beta}$. Consequently,
\begin{equation}
\frac{\beta }{{K\alpha }} < \eta  < \sqrt \beta   \Rightarrow \sqrt \beta   < K\alpha  \Rightarrow \eta  < K\alpha,
\end{equation}
which contradicts the precondition $\eta  > \alpha K$.

Thus, by combining \eqref{eq_ars_O1} and \eqref{eq_ars_O2}, we obtain that when
\begin{equation}
\begin{aligned}
   \eta  < \sqrt \beta   < \alpha K {\text{ or }}\eta  < \alpha K < \sqrt \beta
   \Leftrightarrow \eta  < \min \left( {\alpha K,\sqrt \beta  } \right)
\end{aligned}
,
\end{equation}
${N_{{\text{suf}}}} = O\left( {{K^2}} \right)$. When $\eta  > \alpha K$ and $\eta  < \frac{\beta }{{\alpha K}}$, ${N_{{\text{suf}}}} = O\left( {\frac{K}{\alpha }\eta } \right)$. Otherwise, ${N_{{\text{suf}}}} = O\left( {\frac{{{K^2}}}{\beta }{\eta ^2}} \right)$. 

\subsection{Proof of expression \eqref{eq70}}	
\label{appG}
Since \eqref{eq69} can be rewritten as
\begin{equation}
N \geqslant \left\{ \begin{gathered}
  \frac{1}{2}\left( {a + \sqrt {{a^2} + \frac{8}{{{\alpha ^2}}}b} } \right) \triangleq {\bar N_{{\text{suf1}}}}{\text{, if cond1 holds}} \hfill \\
  \frac{{2b\mu }}{{{P_{{\text{tot}}}} - N{P_{{\text{cir}}}}}} + a \triangleq {\bar N_{{\text{suf2}}}},{\text{ otherwise}} \hfill \\ 
\end{gathered}  \right.
\label{G97},
\end{equation}
where cond1 represents ${P_{{\text{tot}}}} \geqslant  \mu {\alpha ^2}{\bar N_{{\text{suf1}}}} + N{P_{{\text{cir}}}}$, which is equivalent to $N \leqslant \frac{{{P_{{\text{tot}}}} - \mu {\alpha ^2}{\bar N_{{\text{suf1}}}}}}{{{P_{{\text{cir}}}}}}$.\\
\textbf{Scenario 1:} \textit{sufficiently large total power ${{P_{{\text{tot}}}}}$}

We first consider the scenario where the given total power budget ${{P_{{\text{tot}}}}}$ is sufficiently large such that
\begin{equation}
{P_{{\text{tot}}}} \geqslant \left( {{P_{{\text{cir}}}} + \mu {\alpha ^2}} \right){\bar N_{{\text{suf1}}}}
\label{G98}.
\end{equation}

Consequently, according to \eqref{G97}, if $N \leqslant \frac{{{P_{{\text{tot}}}} - \mu {\alpha ^2}{\bar N_{{\text{suf1}}}}}}{{{P_{{\text{cir}}}}}}$, it must also hold that $N \geqslant {\bar N_{{\text{suf1}}}}$. Thus, we have
\begin{equation}
{\bar N_{{\text{suf1}}}} \leqslant N \leqslant \frac{{{P_{{\text{tot}}}} - \mu {\alpha ^2}{\bar N_{{\text{suf1}}}}}}{{{P_{{\text{cir}}}}}}\triangleq {N_c}
\label{G99},
\end{equation}
and the sufficient condition in this case can be expressed as
\begin{equation}
{N_c} \geqslant N \geqslant {\bar N_{{\text{suf1}}}}{\text{, if }}{P_{{\text{tot}}}} \geqslant \left( {{P_{{\text{cir}}}} + \mu {\alpha ^2}} \right){\bar N_{{\text{suf1}}}}
\label{eq_sufficient_C1}.
\end{equation}

In contrast, if $N > {N_c}$, then according to \eqref{G97}, $N$ must also satisfy
\begin{equation}
N \geqslant {\bar N_{{\text{suf2}}}} \Leftrightarrow {{\bar N}_2} \geqslant N \geqslant {{\bar N}_1}
\label{G104},
\end{equation}
where
\begin{equation}
{{\bar N}_{\text{1}}} = \frac{{{P_{{\text{tot}}}} + a{P_{{\text{cir}}}} - \sqrt {{{\left( {{P_{{\text{tot}}}} + a{P_{{\text{cir}}}}} \right)}^2} - 4{P_{{\text{cir}}}}\left( {2b\mu  + a{P_{{\text{tot}}}}} \right)} }}{{2{P_{{\text{cir}}}}}} 
\label{n1},
\end{equation}
\begin{equation}
{{\bar N}_2} = \frac{{{P_{{\text{tot}}}} + a{P_{{\text{cir}}}} + \sqrt {{{\left( {{P_{{\text{tot}}}} + a{P_{{\text{cir}}}}} \right)}^2} - 4{P_{{\text{cir}}}}\left( {2b\mu  + a{P_{{\text{tot}}}}} \right)} }}{{2{P_{{\text{cir}}}}}} 
\label{n2}.
\end{equation}
Moreover, the ${P_{{\text{tot}}}}$ should satisfy
\begin{equation}
{P_{{\text{tot}}}} \geqslant a{P_{{\text{cir}}}} + \sqrt {8{P_{{\text{cir}}}}b\mu }  \triangleq {P_\Delta }
\label{G105}.
\end{equation}
Therefore, the sufficient condition in this case is given by 
\begin{equation}
\left( {N > {N_c}} \right) \cap \left( {{{\bar N}_2} \geqslant N \geqslant {{\bar N}_1}} \right).
\label{eq_N_S21}
\end{equation}

Firstly, if we assume ${{\bar N}_1} > {N_c}$, then
\begin{equation}
\begin{aligned}
{{\bar N}_1} > {N_c} \Leftrightarrow  - &\sqrt {{{\left( {{P_{{\text{tot}}}} + a{P_{{\text{cir}}}}} \right)}^2} - 4{P_{{\text{cir}}}}\left( {2b\mu  + a{P_{{\text{tot}}}}} \right)}  \hfill \\
   &> {P_{{\text{tot}}}} - 2\mu {\alpha ^2}{\bar N_{{\text{suf1}}}} - a{P_{{\text{cir}}}} \hfill \\ 
\end{aligned}
\label{G106}.
\end{equation}
If ${P_{{\text{tot}}}}  \geqslant  2\mu {\alpha ^2}{\bar N_{{\text{suf1}}}} + a{P_{{\text{cir}}}}$, then inequality \eqref{G106} is infeasible.
If ${P_{{\text{tot}}}} < 2\mu {\alpha ^2}{\bar N_{{\text{suf1}}}} + a{P_{{\text{cir}}}}$, then the inequality \eqref{G106} can be transformed as
\begin{equation}
{P_{{\text{tot}}}} < \mu {\alpha ^2}{\bar N_{{\text{suf1}}}} + a{P_{{\text{cir}}}} + \frac{{2b{P_{{\text{cir}}}} }}{{ {\alpha ^2}{\bar N_{{\text{suf1}}}}}}
\label{G107}.
\end{equation}
Substituting ${\bar N_{{\text{suf1}}}} = \frac{1}{2}\left( {a + \sqrt {{a^2} + \frac{8}{{{\alpha ^2}}}b} } \right)$ into \eqref{G107}, we have
\begin{equation}
\mu {\alpha ^2}{\bar N_{{\text{suf1}}}} + a{P_{{\text{cir}}}} + \frac{{2b{P_{{\text{cir}}}}}}{{{\alpha ^2}{\bar N_{{\text{suf1}}}}}} = \left( {{P_{{\text{cir}}}} + \mu {\alpha ^2}} \right){\bar N_{{\text{suf1}}}}
\label{G108},
\end{equation}
it implies that \eqref{G107} contradicts the precondition \eqref{G98}. Thus, it must hold that ${{\bar N}_1} \leqslant {N_c}$.

Similarly, if we assume ${{\bar N}_2} < {N_c}$, then
\begin{equation}
\begin{aligned}
{{\bar N}_2} < {N_c} \Leftrightarrow  &\sqrt {{{\left( {{P_{{\text{tot}}}} + a{P_{{\text{cir}}}}} \right)}^2} - 4{P_{{\text{cir}}}}\left( {2b\mu  + a{P_{{\text{tot}}}}} \right)}  \hfill \\
   &< {P_{{\text{tot}}}} - 2\mu {\alpha ^2}{\bar N_{{\text{suf1}}}} - a{P_{{\text{cir}}}} \hfill \\ 
\end{aligned}
\label{G114}.
\end{equation}
If ${P_{{\text{tot}}}}  \leqslant  2\mu {\alpha ^2}{\bar N_{{\text{suf1}}}} + a{P_{{\text{cir}}}}$, then inequality \eqref{G114} is infeasible.
If ${P_{{\text{tot}}}} > 2\mu {\alpha ^2}{\bar N_{{\text{suf1}}}} + a{P_{{\text{cir}}}}$, then the inequality \eqref{G114} can be transformed as
\begin{equation}
{P_{{\text{tot}}}} < \mu {\alpha ^2}{\bar N_{{\text{suf1}}}} + a{P_{{\text{cir}}}} + \frac{{2b{P_{{\text{cir}}}} }}{{ {\alpha ^2}{\bar N_{{\text{suf1}}}}}},
\end{equation}
which contradicts \eqref{G98}. Thus, it must hold that ${{\bar N}_2} \geqslant {N_c}$.

Therefore, \eqref{eq_N_S21} can be simplified as $ {{{\bar N}_2} \geqslant N > {N_c}}$.
Moreover, ${P_{{\text{tot}}}}$ should satisfy
\begin{equation}
{P_{{\text{tot}}}} \geqslant \max \left( {\mu {\alpha ^2}{\bar N_{{\text{suf1}}}} + a{P_{{\text{cir}}}} + \frac{{2b{P_{{\text{cir}}}}}}{{{\alpha ^2}{\bar N_{{\text{suf1}}}}}},{P_\Delta }} \right)
\label{G118}.
\end{equation}
According to the AM-GM inequality, we have
\begin{equation}
\mu {\alpha ^2}{\bar N_{{\text{suf1}}}} + \frac{{2b{P_{{\text{cir}}}}}}{{{\alpha ^2}{\bar N_{{\text{suf1}}}}}} \geqslant \sqrt {8b{P_{{\text{cir}}}}\mu }.
\end{equation}
Thus,
\begin{equation}
\mu {\alpha ^2}{\bar N_{{\text{suf1}}}} + a{P_{{\text{cir}}}} + \frac{{2b{P_{{\text{cir}}}}}}{{{\alpha ^2}{\bar N_{{\text{suf1}}}}}} \geqslant a{P_{{\text{cir}}}} + \sqrt {8b{P_{{\text{cir}}}}\mu }  \triangleq {P_\Delta }
\label{G121}.
\end{equation}
Therefore, the sufficient condition in this case is given by
\begin{equation}
{{\bar N}_2} \geqslant N > {N_c}{\text{, if }}{P_{{\text{tot}}}} \geqslant \left( {{P_{{\text{cir}}}} + \mu {\alpha ^2}} \right){\bar N_{{\text{suf1}}}}
\label{eq_sufficient_C2}.
\end{equation}

Consequently, combining \eqref{eq_sufficient_C1} and \eqref{eq_sufficient_C2}, we conclude that, if the total power budget is sufficient, the sufficient condition on the required number of active REs is
\begin{equation}
{{\bar N}_2} \geqslant N \geqslant {\bar N_{{\text{suf1}}}}{\text{, if }}{P_{{\text{tot}}}} \geqslant \left( {{P_{{\text{cir}}}} + \mu {\alpha ^2}} \right){\bar N_{{\text{suf1}}}}
\label{eq_N1}.
\end{equation}
\textbf{Scenario 2:} \textit{insufficient total power ${{P_{{\text{tot}}}}}$}

Now, we consider the scenario where the given total power budget ${{P_{{\text{tot}}}}}$ is insufficient such that
\begin{equation}
{P_{{\text{tot}}}} < \left( {{P_{{\text{cir}}}} + \mu {\alpha ^2}} \right){\bar N_{{\text{suf1}}}}
\label{G123}.
\end{equation}
Consequently, according to \eqref{G97}, if $N \leqslant \frac{{{P_{{\text{tot}}}} - \mu {\alpha ^2}{\bar N_{{\text{suf1}}}}}}{{{P_{{\text{cir}}}}}}$, the sufficient condition on $N$ in this case is given by
\begin{equation}
\begin{gathered}
  {\bar N_{{\text{suf1}}}} \leqslant N \leqslant \frac{{{P_{{\text{tot}}}} - \mu {\alpha ^2}{\bar N_{{\text{suf1}}}}}}{{{P_{{\text{cir}}}}}} \triangleq {N_c} \hfill \\
   \Leftrightarrow {P_{{\text{tot}}}} \geqslant \left( {{P_{{\text{cir}}}} + \mu {\alpha ^2}} \right){\bar N_{{\text{suf1}}}} \hfill \\ 
\end{gathered}.
\end{equation}
It contradicts the precondition \eqref{G123}.

In contrast, for $N > {N_c}$, according to \eqref{G97}, \eqref{G104} and \eqref{G105}, we have $N \geqslant {{\bar N}_{{\text{suf2}}}} \Leftrightarrow {{\bar N}_2} \geqslant N \geqslant {{\bar N}_1} \text{ and } {P_{{\text{tot}}}} \geqslant {P_\Delta }$.
Thus, the sufficient condition can be expressed as
\begin{equation}
\left( {N > {N_c}} \right) \cap \left( {{\bar N}_2} \geqslant N \geqslant {{\bar N}_1} \right)
\label{eq_s1}.
\end{equation}

According to \eqref{G114}, if ${P_{{\text{tot}}}} \leqslant 2\mu {\alpha ^2}{\bar N_{{\text{suf1}}}} + a{P_{{\text{cir}}}} \triangleq {P_1}$, then ${N_c} \leqslant {{\bar N}_2}$.
If ${P_{{\text{tot}}}} > {P_1}$ and ${P_{{\text{tot}}}} < \left( {{P_{{\text{cir}}}} + \mu {\alpha ^2}} \right){\bar N_{{\text{suf1}}}} \triangleq {P_2}$, then ${N_c} > {{\bar N}_2}$. It follows that
\begin{equation}
\begin{gathered}
  {\text{when }}{P_1} < {P_2},\left\{ \begin{gathered}
  {\text{if }}{P_{{\text{tot}}}} \leqslant {P_1} < {P_2}{\text{, then }}{N_c} \leqslant {{\bar N}_2}, \hfill \\
  {\text{if }}{P_1} < {P_{{\text{tot}}}} < {P_2}{\text{, then }}{N_c} > {{\bar N}_2}, \hfill \\ 
\end{gathered}  \right. \hfill \\
  {\text{when }}{P_1}  \geqslant {P_2},{P_{{\text{tot}}}} < {P_2} \leqslant {P_1}{\text{, then }}{N_c} < {{\bar N}_2}. \hfill \\ 
\end{gathered}
\label{eq_s2}
\end{equation}

According to \eqref{G106}, if ${P_{{\text{tot}}}} \geqslant {P_1}$, then ${{\bar N}_1} \leqslant {N_c}$. If ${P_{{\text{tot}}}} < {P_1}$ and ${P_{{\text{tot}}}} < {P_2}$, then ${{\bar N}_1} > {N_c}$. It follows that
\begin{equation}
\begin{gathered}
  {\text{when }}{P_1} < {P_2},\left\{ \begin{gathered}
  {\text{if }}{P_{{\text{tot}}}} \leqslant {P_1} < {P_2}{\text{, then }}{N_c} \leqslant {{\bar N}_1}, \hfill \\
  {\text{if }}{P_1} < {P_{{\text{tot}}}} < {P_2}{\text{, then }}{N_c} > {{\bar N}_1}, \hfill \\ 
\end{gathered}  \right. \hfill \\
  {\text{when }}{P_1} \geqslant {P_2},{P_{{\text{tot}}}} < {P_2} \leqslant {P_1}{\text{, then }}{N_c} < {{\bar N}_1}. \hfill \\ 
\end{gathered}
\label{eq_s3}
\end{equation}
Combining \eqref{eq_s1}, \eqref{eq_s2} and \eqref{eq_s3}, we have
\begin{equation}
\begin{gathered}
  \left( {N > {N_c}} \right) \cap \left( {{{\bar N}_2} \geqslant N \geqslant {{\bar N}_1}} \right) \hfill \\
   = \left\{ \begin{gathered}
  \varnothing {\text{, if }}{P_1} < {P_{{\text{tot}}}} < {P_2}, \hfill \\
  {{\bar N}_2} \geqslant N \geqslant {{\bar N}_1}{\text{, if }}{P_{{\text{tot}}}} \leqslant \min \left( {{P_1},{P_2}} \right). \hfill \\ 
\end{gathered}  \right. \hfill \\ 
\end{gathered}
\end{equation}
Moreover, for the second term, since ${P_{{\text{tot}}}} \geqslant {P_\Delta }=a{P_{{\text{cir}}}} + \sqrt {8{P_{{\text{cir}}}}b\mu }$ and ${P_{{\text{tot}}}} \leqslant P_1 = 2\mu {\alpha ^2}{\bar N_{{\text{suf1}}}} + a{P_{{\text{cir}}}}$, it follows that
\begin{equation}
\begin{gathered}
  \sqrt {8{P_{{\text{cir}}}}b\mu }  \leqslant {P_{{\text{tot}}}} - a{P_{{\text{cir}}}} \leqslant 2\mu {\alpha ^2}{\bar N_{{\text{suf1}}}} \hfill\\
 \Rightarrow  \frac{{2{P_{{\text{cir}}}}b}}{{{\alpha ^2}{\bar N_{{\text{suf1}}}}}} \leqslant \mu {\alpha ^2}{\bar N_{{\text{suf1}}}} \hfill \\ 
\end{gathered}
\label{eq_P1P2},
\end{equation}
based on \eqref{G108}, \eqref{G121} and \eqref{eq_P1P2}, we have
\begin{equation}
\begin{gathered}
  \mu {\alpha ^2}{{\bar N}_{{\text{suf1}}}} + a{P_{{\text{cir}}}} + \frac{{2b{P_{{\text{cir}}}}}}{{{\alpha ^2}{{\bar N}_{{\text{suf1}}}}}} \leqslant 2\mu {\alpha ^2}{{\bar N}_{{\text{suf1}}}} + a{P_{{\text{cir}}}} \hfill \\
   \Leftrightarrow {P_\Delta } \leqslant {P_2} \leqslant {P_1} \hfill \\ 
\end{gathered}
\label{G127}.
\end{equation}
Therefore, in this insufficient total power scenario, the sufficient condition on the required number of active REs is
\begin{equation}
{{\bar N}_2} \geqslant N \geqslant {{\bar N}_1}{\text{, if }}{P_\Delta } \leqslant {P_{{\text{tot}}}} < \left( {{P_{{\text{cir}}}} + \mu {\alpha ^2}} \right){\bar N_{{\text{suf1}}}}
\label{eq_N2}.
\end{equation}
Combining \eqref{eq_N1} and \eqref{eq_N2} yields \eqref{eq70}, completing the proof.

\bibliographystyle{IEEEtran} 
\bibliography{ref}

\end{document}